\newtheorem{theorem}{Theorem}
\newtheorem{lemma}[theorem]{Lemma}
\newtheorem{corollary}[theorem]{Corollary}
\newenvironment{proof}{
\par
\noindent {\bf Proof.}\rm}%
{\mbox{}\hfill\rule{0.5em}{0.809em}\par}
\title{\bf \normalsize
\Large Target set selection problem for honeycomb networks}
\author{\small Chun-Ying~Chiang\thanks{Partially
supported by National Science Council under
grant NSC100-2115-M-008-007-MY2.},
 Liang-Hao~Huang\thanks{Partially supported by
 National Science Council under grant NSC100-2811-M-008-052.},
 Hong-Gwa Yeh\thanks{Partially supported by National Science Council under
grant NSC100-2115-M-008-007-MY2.}
 \thanks{Corresponding author (hgyeh@math.ncu.edu.tw)}\\
{\footnotesize \em Department of Mathematics, National Central
University, Jhongli City, Taiwan}}
\date{}
\begin{document}
\maketitle \baselineskip=17pt
\begin{abstract}
 Let $G$ be a graph with
 a threshold function $\theta:V(G)\rightarrow \mathbb{N}$ such that
 $1\leq \theta(v)\leq d_G(v)$ for every vertex $v$ of $G$,
 where $d_G(v)$ is the degree of $v$ in
 $G$. Suppose we are given a target set $S\subseteq V(G)$.
 The paper considers the following repetitive process on $G$.
 At time step $0$ the vertices of $S$ are colored black and the
 other vertices are colored white. After that, at each time step $t>0$,
 the colors of white vertices (if any) are updated according
  to the following rule.
   All white vertices $v$ that have at least
   $\theta(v)$ black neighbors at the time step $t-1$
   are colored black, and the colors of the other vertices do not change.
 The process runs until
 no more white vertices
 can update colors from white to black.
 The following optimization problem is called
          T{\scriptsize ARGET} S{\scriptsize ET}
          S{\scriptsize ELECTION}:
          Finding a target set $S$ of smallest possible size
          such that all vertices in $G$  are black at the
          end of the  process. Such an $S$ is called
          an {\em optimal target set} for $G$
          under the threshold function $\theta$.
 We are interested in finding an optimal target set for
 the well-known class of honeycomb networks under an important threshold function
 called {\em strict majority threshold}, where
 $\theta(v)=\lceil
   (d_G(v)+1)/2\rceil$ for each vertex $v$ in $G$.
  In a graph $G$, a {\em feedback vertex set} is a subset $S\subseteq V(G)$
  such that the subgraph induced by $V(G)\setminus S$ is cycle-free.
   In this paper
   we give exact value on the size
   of the optimal target set for various kinds of honeycomb networks
   under strict majority threshold,
   and as a by-product
   we also provide a minimum feedback vertex set for
   different kinds regular graphs in the class of honeycomb networks.
   \medskip

   \noindent
 {\bf \em Key words:}
 Social networks,
 viral marketing,
  irreversible spread of influence,
  dynamic monopolies,
  target set selection,
  strict majority threshold,
  feedback vertex set,
  decycling,
  honeycomb network,
  hexagonal grid.
\end{abstract}
\parskip 4pt

  %
  %
  %
  %
   \section{Introduction}
   \label{intro}
  Most computer users connect with their friends through email, social networks and chatting applications.
  Recently some popular social networks, such as Facebook, YouTube, Twitter and blogs, have become one of the most important ways for companies to market themselves.
  A lot of companies use viral-marketing techniques to advertise their products.
  Viral marketing is used to quickly spread the word about their products and brands.
  Individual decisions are influenced by others.
  In viral marketing, company tries to target some small number of people to ``seed" their product; advertising spread from one person to another by people talking about it - a kind of snowball effect; and then
  the advertising reaches nearly every potential customer.

  %
  %
  %
  %
 Consider the following hypothetical scenario
 as a motivating example.
 A company wish to market a new product.
 The company has at hand a description of the social network $G$
 formed among a sample of potential customers, where the
 vertices represent customers and edges connect people to
 their friends.
 The company wants to target key potential customers $S$ of the social network and persuade them into adopting the new product
 by handing out free samples.
 We assume that individuals in $S$ will be convinced to adopt the new product after they receive a free sample, and
 the friends of customers in $S$ would be
 persuaded into buying the new product, which in turn will
 recommend the product to other friends.
 The company hopes that by word-of-mouth effects,
 convinced vertices in $S$
 can trigger a cascade of further adoptions, and finally
 all potential customers will be persuaded to buy the product.
 But now how to find a good set of potential customers $S$
 to target? To study this problem, in the following we formally define it.

 A graph $G$ consists of a set $V(G)$ of {\em vertices}
 together with a set $E(G)$ of unordered pairs of
 vertices called {\em edges}.
 We use $uv$ for an edge $\{u,v\}$.
 The {\em degree} of a vertex $v\in V(G)$ is
 the number of vertices adjacent to $v$
 and is denoted by $d_G(v)$
 (the subscript $G$ will be dropped if no confusion
 can arise).
  %
  %
   A person-to-person recommendation social network is
   usually modeled by a graph $G$ together with a
   threshold function $\theta : V(G)\rightarrow \mathbb{N}$
   such that  $1\leq \theta(v)\leq d_G(v)$ for each vertex $v$ in $G$, and such a social network is denoted by $(G,\theta)$.
  For the sake of convenience if $\theta(v)=k$ for all vertices $v$ in $G$, then
  $(G,\theta)$ shall be abbreviated to $(G,k)$.
  In marketing setting, the threshold of a vertex (customer) $v$
  represents his/her latent tendency of buying
  the new product when his/her neighbors (friends) do (see \cite{watts2007HBR}).
   There are two types of important and well-studied thresholds
   on a graph $G$
   called {\em majority threshold}  and {\em strict majority
   threshold} (see
   \cite{Adams-Hexagonal-Grids2011,Chang2010,Khoshkhah-Zaker2012,
   Kilgour2010,Linial-Peleg1993,Peleg1998,
   Peleg2002} and references therein), which will be denoted by
   $\theta_{\geq}$
   and $\theta_{>}$ respectively throughout this paper.
    In a majority threshold
   we have $\theta_{\geq}(v)=\lceil
   d(v)/2\rceil$ for every vertex $v$ of $G$,
   while in a strict majority threshold we have
   $\theta_{>}(v)=\lceil
   (d(v)+1)/2\rceil$ for every vertex $v$ of $G$.

  Given a vertex subset $S$ of a connected social network $(G,\theta)$.
  Consider the following repetitive process played on $(G,\theta)$
  called {\em activation process on $(G,\theta)$ starting from  $S$}.
  At round $0$ (the beginning of the game),
  the vertices of $S$ are colored black and
  the other vertices are colored white.
  After that, at each round $t>0$,
  the colors of white vertices (if any) are updated according
  to the following rule:
   \begin{description}
  \item[Parallel updating rule:]
   All white vertices $v$ that have at least
   $\theta(v)$ black neighbors at the previous round $t-1$
   are colored black. The colors of the other vertices do not change.
   \end{description}

   \noindent
   The process runs until
   no more white vertices
   can update colors from white to black.
   The set $S$ is called a {\em target set} for $(G,\theta)$.
   We denote by $[S]_\theta^G$ the set of vertices
   that are black at the end of the process.
   If $F\subseteq [S]_\theta^G$, then we say that the target set
   $S$ {\em influences} $F$ on $(G,\theta)$.
   We are interested in the following optimization problem:
     \begin{description}
          \item[T{\scriptsize ARGET} S{\scriptsize ET}
          S{\scriptsize ELECTION}:]
          Finding a target set $S$ of smallest possible size
          such that all vertices in $(G,\theta)$  are black at the
          end of the activation process starting from $S$.
          Such an $S$ is called
          an {\em optimal target set} for $(G,\theta)$
          and its size is denoted by {\rm min-seed}$(G,\theta)$.
   \end{description}

  %
  %
  %
  %
  The theoretical investigations of certain kinds of
  target set selection problem
  were initiated by
   Kempe, Kleinberg, and Tardos in \cite{kkt2003,kkt2005},
   where they mainly consider
   probabilistic thresholds such that all thresholds are
   drawn randomly from a given distribution.
   They focused on the maximization
   problem - find a target set of a given size $k\in \mathbb{N}$ to maximize the expected number of black vertices at the end of the activation process.

   Many authors have investigated
   target set selection problem with
   different types of thresholds and network structures
   in various settings and under a variety of assumptions.
   In a dynamic monopoly setting, Peleg \cite{Peleg2002}
   proved that it is NP-hard to
   compute the optimal target set for majority thresholds.
   In constant threshold setting,
   Dreyer and Roberts \cite{Dreyer+Roberts} showed that
   it is NP-hard to compute the min-seed$(G,k)$ for any
   $k\geq 3$, and Chen \cite{chen} showed that
   the target set selection problem is NP-hard when
   the thresholds are at most $2$,
   even for bounded bipartite graphs.
   In fact, this problem is not only NP-hard, it is
   also extremely hard to solve approximately.
   Chen \cite{chen} proved that
   min-seed$(G,\theta_\geq)$
   cannot be approximated within the ratio
   $O(2^{\log^{1-\epsilon}n})$ for any fixed constant $\epsilon>0$, unless $NP\subseteq DTIME(n^{polylog(n)})$, where $n=|V(G)|$.

  We now turn to determine the exact value of min-seed$(G,\theta)$ for certain families of graphs $G$ under specific threshold functions $\theta$.
  Related results can be found in
  \cite{Ackerman2010,Adams-Hexagonal-Grids2011,Ben-zwi2011,Berger2001,chen,Yeh2011,
  Dreyer+Roberts,Flocchini2001,Flocchini2003,Flocchini2004,Flocchini2009,Luccio1998,
  Luccio1999,Peleg1998,Peleg2002,SIAM2005decycling-two-cycles,zaker2012}, where min-seed$(G,\theta)$ has
  been investigated  for different types of network structure $G$
  such as
  bounded treewidth graphs,
  hexagonal grids,
  trees,
  cycle permutation graphs,
  generalized Petersen graphs,
  block-cactus graphs,
  chordal graphs,
  Hamming graphs,
  chordal rings,
  tori,
  meshes,
  butterflies,
  Cartesian products of two cycles.

   Majority threshold model has many applications in
   distributed computing
   such as maintaining data consistency in a distributed system,
   fault-local mending in distributed network and
   overcoming failure in distributed computing
    \cite{Kutten-Peleg1995,Linial-Peleg1993,Peleg1998,Peleg2002}.
   On the other hand,
   honeycomb networks have been suggested as an attractive
   architecture for interconnected networks
   which have been widely investigated in
   parallel and distributed applications
   (see \cite{Alspach-Dean2009,Stojmenovic1997} and references therein).
   %
   In this paper, we study
   target set selection problem under strict majority thresholds
   on different kinds of honeycomb networks such as
   honeycomb mesh ${\rm HM}_t$,
   honeycomb torus ${\rm HT}_t$,
   honeycomb rectangular torus ${\rm HReT}(m,n)$,
   honeycomb rhombic torus ${\rm HRoT}(m,n)$,
   generalized honeycomb rectangular torus ${\rm GHT}(m,n)$,
   planar hexagonal grid ${\rm PHG}(m,n)$,
   cylindrical hexagonal grid ${\rm CHG}(m,n)$, and
   toroidal hexagonal grid ${\rm THG}(m,n)$
   (all terms will be defined in later sections).

   In Section \ref{HM} we determine the exact value of min-seed$(G,\theta_>)$ for any honeycomb mesh $G$.
   In Section \ref{GHRT}, by computing the optimal target set
   for a generalized honeycomb rectangular torus, we
   determine the exact values of
   min-seed$(G,\theta_>)$ when $G$ is a
   honeycomb torus or a
   honeycomb rectangular torus or a
   honeycomb rhombic torus.
   Finally, in Section \ref{sec-hexagonal-grids},
   we compute min-seed$(G,\theta_>)$ for planar, cylindrical, and toroidal hexagonal grids $G$, where
   $\theta_>$ denote the strict majority threshold of $G$.
   Our results in Section \ref{sec-hexagonal-grids}
   are summarized in Table 1.

  %
  %
  %
  %
  \begin{center}
  \begin{tabular}{ll|ll}
  \hline
  Structure of $G$   &&&  Result \\
  \hline
  Planar             &&&  {\rm min-seed}$(G,\theta_>)=
                   \lceil{mn+2m+n\over 4}\rceil-1$ \\

  Cylindrical        &&&  {\rm min-seed}$(G,\theta_>)=
                   \lceil{mn+2m\over 4}\rceil$ \\

  Toroidal          &&&  {\rm min-seed}$(G,\theta_>)=
                   \lceil{mn+2\over 4}\rceil$ \\
  \hline
  \end{tabular}
  \end{center}
  \begin{description}
    \item[Table 1.]  Summary of results on {\rm min-seed}$(G,\theta_>)$
     where $G$ is an $m$ by $n$ hexagonal
    grid with $m\geq 2$, $n\geq 4$, and $n$ even.
  \end{description}

   A subset $S$ of $V(G)$ is a {\em feedback vertex set}
   (or a {\em decycling set}) of a graph $G$ if the subgraph of $G$
   induced by the vertices in $V(G)\setminus S$ is acyclic
   (see \cite{feedback-set2000,SIAM2005decycling-two-cycles} and
   references therein).
   The size of a minimum feedback vertex set in
   a graph $G$ is called the {\em decycling number} of $G$
   and is denoted by $\nabla(G)$ (adapted from \cite{SIAM2005decycling-two-cycles}).
   In \cite{Adams-Hexagonal-Grids2011},
   by using feedback vertex sets for graphs,
   Adams, Troxell and Zinnen
   show lower and upper bounds for
   min-seed$(G,\theta_\geq)$ when $G$ is one of the graphs
   planar, cylindrical, and toroidal hexagonal grids.
   We summarize their results in Table 2, where
   $\theta_\geq$ denote the majority threshold of $G$.
 Since toroidal hexagonal grids  are $3$-regular,
 it can readily be seen that  if $G$
 is a toroidal hexagonal grid, then
 {\rm min-seed}$(G,\theta_{\geq})=\mbox{
 min-seed}(G,\theta_{>})$.
 Thus
 our result for toroidal hexagonal grids (see Table 1)
 closes the gap in the corresponding result of Table 2.

  %
  %
  %
  %
  \begin{center}
  \begin{tabular}{ll|ll}
  \hline
  Structure of $G$   &&&  Result \\
  \hline
  Planar             &&&  {\rm min-seed}$(G,\theta_{\geq})=
                   \lceil{(n-2)(m-1)\over 4}\rceil$ \\

  Cylindrical        &&&  {\rm min-seed}$(G,\theta_{\geq})
                  \in \{\lceil{(n-2)m+2 \over 4}\rceil,
                 \lceil{(n-2)m+2 \over 4}\rceil+1\}$ \\

  Toroidal          &&&  {\rm min-seed}$(G,\theta_{\geq})
                  \in \{\lceil{mn+2 \over 4}\rceil,
                  \lceil{mn+2 \over 4}\rceil+1\}$ \\
  \hline
  \end{tabular}
  \end{center}
  \begin{description}
    \item[Table 2.]  Summary of results on {\rm min-seed}$(G,\theta_{\geq})$
    proved in \cite{Adams-Hexagonal-Grids2011} where $G$ is an $m$ by $n$ hexagonal
    grid with $m\geq 2$, $n\geq 4$, and $n$ even.
  \end{description}

 In \cite{Dreyer+Roberts}, Dreyer and Roberts show that, for a vertex
 subset $S$ of a $(k+1)$-regular graph $G$,
 the target set
 $S$ can influence all vertices of $V(G)\setminus S$
 in the social network $(G,k)$
 if and only if $S$ is a feedback vertex set of $G$.
 In \cite{Adams-Hexagonal-Grids2011},
 the authors further show that if $G$ is a graph
 with minimum degree at least $2$,
 maximum degree at most $3$ and $S\subseteq V(G)$,
 then $S$ can influence all vertices of $V(G)\setminus S$ in the social network $(G,\theta_\geq)$ if and only if
 $S$ is a feedback vertex set of $G$.

 Finding a minimum feedback vertex set of a graph is quite
 difficult, and has been proved to be NP-complete in general
 \cite{karp1972}.
 However, in this paper by using the above facts, we are
 able to provide a minimum feedback vertex set in
   honeycomb torus networks,
   honeycomb rectangular torus networks,
   honeycomb rhombic torus networks,
   generalized honeycomb rectangular torus networks,
   and
   toroidal hexagonal grid networks.

  %
  %
  %
  %
   \section{Notations and preliminary results}
   \label{pre}
  %
  %
  %
  In this section, we introduce
  the necessary notations, definitions and preliminary results
  which will be used through the paper.
   For a set $S\subseteq V(G)$, the {\em
  subgraph of $G$ induced by $S$} is the graph  with vertex set
  $S$ and edge set $\{ uv \in E(G): u,v \in S\}$ and is
  denoted by $G[S]$. Denote by $G\setminus S$ the
  subgraph of $G$ induced by $V(G)\setminus S$.
   In order to study the optimal target sets for $(G,\theta)$
   we introduce a sequential version of
   activation process on $(G,\theta)$,
   called {\em sequential activation process}
   in which at each round $t>0$ one employs the following sequential updating rule
   instead of the parallel updating rule:

   \begin{description}
   \item[Sequential updating rule:]
          Exactly one of
          white vertices that have at least $\theta(v)$
          black neighbors at the previous round $t-1$
          is colored black. The colors of the other vertices do not change.
   \end{description}

  %
  %
   Given a target set $S$ for $(G,\theta)$,
   consider a
   sequential activation process
   starting from $S$.
   In this process, if $v_1,v_2,\ldots,v_r$ is the order that vertices
   in $[S]^G_\theta\setminus S$ become black,
   then $[v_1,v_2,\ldots,v_r]$
   is called the {\em convinced sequence}
   of $S$ on $(G,\theta)$.
   We define an operation $\sqcup$
   on convinced subsequences
   $\alpha=[v_1,v_2,\ldots,v_r]$
   and $\beta =[u_1,u_2,\ldots,u_s]$ as follows:
   $\alpha\sqcup \beta=[v_1,v_2,\ldots,v_r,u_1,u_2,\ldots,u_s]$.
   For a list of convinced subsequences $\{\alpha_{i,j}\}_{1\leq i\leq k,
   1\leq j\leq \ell}$, the sequences $\sqcup_{i=1}^k\alpha_{i,j}$ and
   $\sqcup_{j=1}^\ell\sqcup_{i=1}^k \alpha_{i,j}$ are
   defined to be
   $$\sqcup_{i=1}^k\alpha_{i,j}=\alpha_{1,j}\sqcup\alpha_{2,j}\sqcup
   \cdots\sqcup\alpha_{k,j} \mbox{ and }
   \sqcup_{j=1}^\ell\sqcup_{i=1}^k \alpha_{i,j}=
   \sqcup_{j=1}^\ell(\sqcup_{i=1}^k \alpha_{i,j}).$$

    %
    %
    A {\em vertex-ordering} $\pi$ of a graph $G$ having $n$ vertices
    is a numbering $(v_1,v_2,\ldots,v_n)$ of $V(G)$.
    For an edge $v_iv_j$ with $i<j$, $v_j$ is a {\em successor} of
    $v_i$,
    and $v_i$ is a {\em predecessor} of $v_j$.
    The number of predecessors and successors of a vertex $v_k$
    is denoted by ${\rm pred}_\pi(v_k)$ and ${\rm succ}_\pi(v_k)$,
    respectively. We may omit the subscript $\pi$ if the ordering
    is clear.
    %
    %
   The proof of the following  lemma is straightforward and
   so is omitted. This lemma will be used frequently in
   the sequel, sometimes without explicit reference to it.
   \begin{lemma}\label{seq=para}
   Let $(G,\theta)$ be a connected graph $G$ with
   thresholds $\theta$ on the vertices of $G$.
   $(a)$ An optimal target set for $(G,\theta)$ under the sequential updating
   rule is also an optimal target set for $(G,\theta)$
   under the parallel updating
   rule, and vice versa. $(b)$ Finding an optimal target set $S$ for
   $(G,\theta)$ is equivalent to that of finding a set $S\subseteq V(G)$
   of minimum possible cardinality such that $G\setminus S$
   has a vertex-ordering $(v_1,v_2,\ldots,v_{|V(G\setminus S)|})$
   with the following property: for each $1\leq i\leq |V(G\setminus
   S)|$, $v_i$ is adjacent to at least $\theta(v_i)$ vertices
   in the set $S\cup \{v_j: j\leq i-1\}$.
   \end{lemma}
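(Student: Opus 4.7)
The plan is to first establish an auxiliary claim and then derive both parts from it. The claim is that for every $S \subseteq V(G)$, the final set $[S]_\theta^G$ is the same whether we use the parallel or the sequential updating rule, and in the sequential case it does not depend on which eligible vertex we pick at each step. Because the black set is monotonically non-decreasing in both processes and the thresholds are fixed, both final sets can be viewed as fixed points of the same monotone operator, so I expect a direct double inclusion by induction to give the equality.

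For the parallel-subset-of-sequential direction, I would induct on the round $t$ at which a vertex $v$ first becomes black under the parallel rule. If $t=0$, then $v \in S$ and $v$ is black from the start of the sequential process as well. If $t > 0$, the $\theta(v)$ neighbors of $v$ that activated $v$ in the parallel process all belong, by induction, to the final black set under the sequential rule; since the sequential process halts only when no eligible white vertex remains, $v$ itself must eventually be colored black. The reverse inclusion is symmetric, inducting on the step at which $v$ is colored in the chosen sequential process. Part (a) is immediate once this equivalence is in hand: the condition ``$S$ colors every vertex of $G$'' means exactly $[S]_\theta^G = V(G)$ under either rule, so the collections of optimal target sets coincide.

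For part (b), I would translate the sequential description into the ordering description. Given an optimal target set $S$, running the sequential process from $S$ produces a convinced sequence $v_1, v_2, \ldots, v_k$ with $k = |V(G\setminus S)|$ that enumerates $V(G)\setminus S$; the sequential updating rule guarantees that each $v_i$ has at least $\theta(v_i)$ neighbors in $S \cup \{v_j : j \leq i-1\}$, which is precisely the required vertex-ordering of $G\setminus S$. Conversely, any such vertex-ordering of $G\setminus S$ can be replayed step by step as a legitimate sequential activation process, in which every vertex is eventually colored black, so $[S]_\theta^G = V(G)$. Consequently the two minimization problems are equivalent and an optimizer for one is an optimizer for the other.

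The only delicate point is making the auxiliary claim precise about independence from the sequential choice, but the monotonicity observation above handles this with no further work, so the proof really is routine as the authors indicate.
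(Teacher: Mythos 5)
Your proof is correct. The paper omits the proof of this lemma as ``straightforward,'' and your argument---a double-inclusion induction showing that the parallel process and every sequential process produce the same final black set $[S]_\theta^G$, from which (a) is immediate and (b) follows by reading the convinced sequence of a sequential run as the required vertex-ordering of $G\setminus S$ (and conversely replaying such an ordering as a sequential activation)---is precisely the standard argument the authors intend.
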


  We use similar ideas 
  as in the proof of Theorem 1 of
  \cite{zaker2012} to show the results in Lemma
  \ref{lowerbound-for-target-set}
  which generalizes Lemma 4 of \cite{Yeh2011-GPG}.

  %
  %
  \begin{lemma}
  \label{lowerbound-for-target-set}
  Let $(G,\theta)$ be a connected graph $G$ with
  thresholds $\theta$ on $V(G)$ and let $\Delta$
  be the maximum degree of $G$.
  Let $n=|V(G)|$, $m=|E(G)|$,
  $\delta=\max\{d_G(v)-\theta(v): v\in V(G)\}$,
  $\theta_V=\sum_{v\in V(G)}\theta(v)$,
  $\theta_{\max}=\max\{\theta(v):v\in V(G)\}$ and
  $\theta_{\min}=\min\{\theta(v):v\in V(G)\}$.
  Then the following quantity $\Lambda$ is a lower bound for \mbox{{\rm
  min-seed}$(G,\theta)$}:
  \begin{linenomath}
  $$\Lambda=\max\left\{
     {m-(n-1)\delta \over \Delta-\delta},
     {\theta_V-m\over \theta_{\max}},
     {n\theta_{\min}-m\over \theta_{\min}},
     {\theta_V-(n-1)\delta\over \Delta-\delta+\theta_{\max}},
     {n\theta_{\min}-(n-1)\delta\over \Delta-\delta+\theta_{\min}}
                \right\}.
  $$
  \end{linenomath}
  \end{lemma}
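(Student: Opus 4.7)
The plan is to apply Lemma \ref{seq=para}(b): fix an optimal target set $S$ with $s=|S|=\mbox{min-seed}(G,\theta)$ together with a vertex-ordering $v_1,\ldots,v_{n-s}$ of $G\setminus S$ such that each $v_i$ has at least $\theta(v_i)$ neighbors in $S\cup\{v_1,\ldots,v_{i-1}\}$. For $1\leq i\leq n-s$, let $p_i$, $q_i$, $r_i$ denote the number of neighbors of $v_i$ in $S$, in $\{v_1,\ldots,v_{i-1}\}$, and in $\{v_{i+1},\ldots,v_{n-s}\}$, respectively; then $d_G(v_i)=p_i+q_i+r_i$, the hypothesis gives $p_i+q_i\geq\theta(v_i)$, and hence $r_i\leq d_G(v_i)-\theta(v_i)\leq\delta$, with $r_{n-s}=0$ in particular.

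I would then derive two elementary edge counts. For the first, orient every edge inside $G[S]$ arbitrarily, every edge joining $S$ to $V(G)\setminus S$ from $S$ toward $V(G)\setminus S$, and every edge inside $G\setminus S$ from the lower-indexed endpoint to the higher-indexed one. Then $m$ equals the sum of all out-degrees; each of the $s$ vertices of $S$ contributes at most $\Delta$, each $v_i$ contributes exactly $r_i\leq\delta$, and $v_{n-s}$ contributes $0$, so
$$m \;\leq\; s\Delta + (n-s-1)\delta \;=\; s(\Delta-\delta) + (n-1)\delta. \qquad (\mathrm{I})$$
For the second, sum the inequality $p_i+q_i\geq\theta(v_i)$ over $i$: the left-hand side equals $m-e_S$, where $e_S$ denotes the number of edges inside $G[S]$. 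Using $e_S\geq 0$ together with $\sum_i\theta(v_i)=\theta_V-\sum_{v\in S}\theta(v)\geq\theta_V-s\,\theta_{\max}$ and $\sum_i\theta(v_i)\geq(n-s)\theta_{\min}$ gives
$$m \;\geq\; \theta_V - s\,\theta_{\max} \qquad\mbox{and}\qquad m \;\geq\; (n-s)\theta_{\min}. \qquad (\mathrm{II})$$

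The five terms inside the maximum defining $\Lambda$ now fall out algebraically: rearranging $(\mathrm{I})$ alone yields the first term; the two halves of $(\mathrm{II})$ yield the second and third; and eliminating $m$ between $(\mathrm{I})$ and the first (respectively second) half of $(\mathrm{II})$ yields the fourth (respectively fifth). The only delicate point in the whole argument is remembering that $r_{n-s}=0$ so that $(\mathrm{I})$ carries an $(n-1)\delta$ term rather than the looser $n\delta$; without this sharpening, the first, fourth, and fifth terms of $\Lambda$ would not come out as stated.
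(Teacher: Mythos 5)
Your proposal is correct and follows essentially the same route as the paper: the bound $m\leq s\Delta+(n-s-1)\delta$ is the paper's two-sided count of $|E(G\setminus S)|$ via $\mathrm{succ}_\pi(v_i)\leq d_G(v_i)-\theta(v_i)$, and your summation of $p_i+q_i\geq\theta(v_i)$ is the paper's estimate $\sum_i E(S\cup\{v_j:j\leq i-1\},\{v_i\})\geq\sum_i\theta(v_i)$, with the five bounds extracted by the same algebraic eliminations. The only (immaterial) difference is that you chain the fourth and fifth inequalities through $m$, whereas the paper combines the two upper bounds as $\min\{m,\,s\Delta+(n-s-1)\delta\}$ and compares directly with $\max\{\theta_V-s\theta_{\max},\,(n-s)\theta_{\min}\}$.
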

  %
  %
  \begin{proof}
  Let $S$ be an optimal target set for $(G,\theta)$ and let
  $V=V(G)$, $s=|S|$ and $\ell= n-s$.
  For any two subsets $A,B\subseteq V$,
  let $E(A,B)$ denote the number of edges between $A$ and $B$.
  Since $S$ (sequentially) influences all vertices of
  $(G,\theta)$, $G\setminus S$
   has a vertex-ordering $\pi=(v_1,v_2,\ldots,v_{\ell})$
   with the following property: for each $1\leq i\leq \ell$,
   $v_i$ is adjacent to at least $\theta(v_i)$ vertices
   in the set $S\cup \{v_j: j\leq i-1\}$, and hence
   ${\rm succ}_\pi (v_i)\leq d_G(v_i)-\theta(v_i)$.
   It follows that
   $|E(G\setminus S)|=\sum_{i=1}^{\ell -1}{\rm succ}_\pi (v_i)\leq
   \sum_{i=1}^{\ell -1}(d_G(v_i)-\theta(v_i))\leq (n-s-1)\delta$.
   Note that
  if $e$ is an edge in $E(G)$ but not in $E(G\setminus S)$,
  then $e$ has an end in $S$.
  This leads to $|E(G\setminus S)|\geq m-s\Delta$.
  Thus $m-s\Delta\leq (n-s-1)\delta$, and hence
  $s\geq {m-(n-1)\delta\over \Delta-\delta}$.
  To prove the remaining part of the lemma,
  we see that
  \begin{linenomath}
  \begin{eqnarray*}
    \min\left\{m,s\Delta+(n-s-1)\delta\right\}
    &\geq & \min\left\{m,s\Delta + \sum_{i=1}^{\ell -1}{\rm succ}_\pi (v_i)\right\} \\
    &\geq & \min\left\{m,E(S,\{v_j:1\leq j\leq \ell\})
                    + \sum_{i=1}^{\ell -1}{\rm succ}_\pi (v_i)\right\} \\
     &=& \sum_{i=1}^{\ell } E(S\cup\{v_j:j\leq i-1\},\{v_i\})\\
     &\geq & \sum_{i=1}^{\ell }\theta(v_i)
     \geq
     \max\left\{\theta_V-s\theta_{\max},(n-s)\theta_{\min}\right\}.
  \end{eqnarray*}
  \end{linenomath}
  This implies the following four inequalities:
   $m \geq \theta_V-s\theta_{\max}$,
   $m \geq (n-s)\theta_{\min}$,
   $s\Delta+(n-s-1)\delta \geq \theta_V-s\theta_{\max}$, and
   $s\Delta+(n-s-1)\delta \geq (n-s)\theta_{\min}$.
   After simple algebraic manipulations, we obtain
  $s\geq
     {\theta_V-m\over \theta_{\max}}$,
  $s\geq
     {n\theta_{\min}-m\over \theta_{\min}}$,
  $s\geq
     {\theta_V-(n-1)\delta\over \Delta-\delta+\theta_{\max}}$,
  $s\geq
     {n\theta_{\min}-(n-1)\delta\over \Delta-\delta+\theta_{\min}}$,
     respectively,
     which complete the proof of the lemma.
  \end{proof}
  We remark that the result
  min-seed$(G,\theta)\geq {n\theta_{\min}-m\over \theta_{\min}}$
  shown
  in Lemma \ref{lowerbound-for-target-set} has already
  appeared in Corollary 2 of
  \cite{zaker2012}.

   %
   %
   %
   %
   \section{Honeycomb mesh}
   \label{HM}
   In this section, we determine the exact value for
   {\rm min-seed}$(G,\theta_>)$ where $G$ is a honeycomb mesh
   network with strict majority threshold function $\theta_>$.
   %
   %
   The {\em honeycomb mesh} of size $t$ (see \cite{Stojmenovic1997}
   for a comprehensive introduction to this class of graphs
   and their variants),
   denoted by HM$_t$ is defined inductively as follows:
   HM$_1$ is a hexagon. Honeycomb mesh HM$_t$ of size $t>1$
   is obtained from HM$_{t-1}$ by adding a layer of hexagons
   around the boundary of HM$_{t-1}$.
   The number of vertices and edges of HM$_t$ are $6t^2$ and
   $9t^2-3t$, respectively.
   The edges of HM$_t$ are in $3$ different directions.
   See Figure 1 for
   examples of HM$_t$ when $t=1,2,3$, where
   the point ${O}$ of HM$_t$ is called the {\em centre} of
   the honeycomb mesh.
   Through ${O}$ one can draw three lines perpendicular to
   the three edge directions and name them as $\alpha, \beta, \gamma$ axes.
   These three axes will be used in Section \ref{GHRT} to define the
   honeycomb torus network introduced in \cite{Stojmenovic1997}.

   Every honeycomb mesh has a nice drawing as shown in Figure 2.
   We call this kind of drawing the {\em castle drawing}.
   In Figure 2,
   we show an addressing scheme to describe
   the vertices of a honeycomb mesh
   which will be used in the proof of
  Theorem \ref{main-min-seed-HM_t}.



 \includegraphics[scale=0.55]{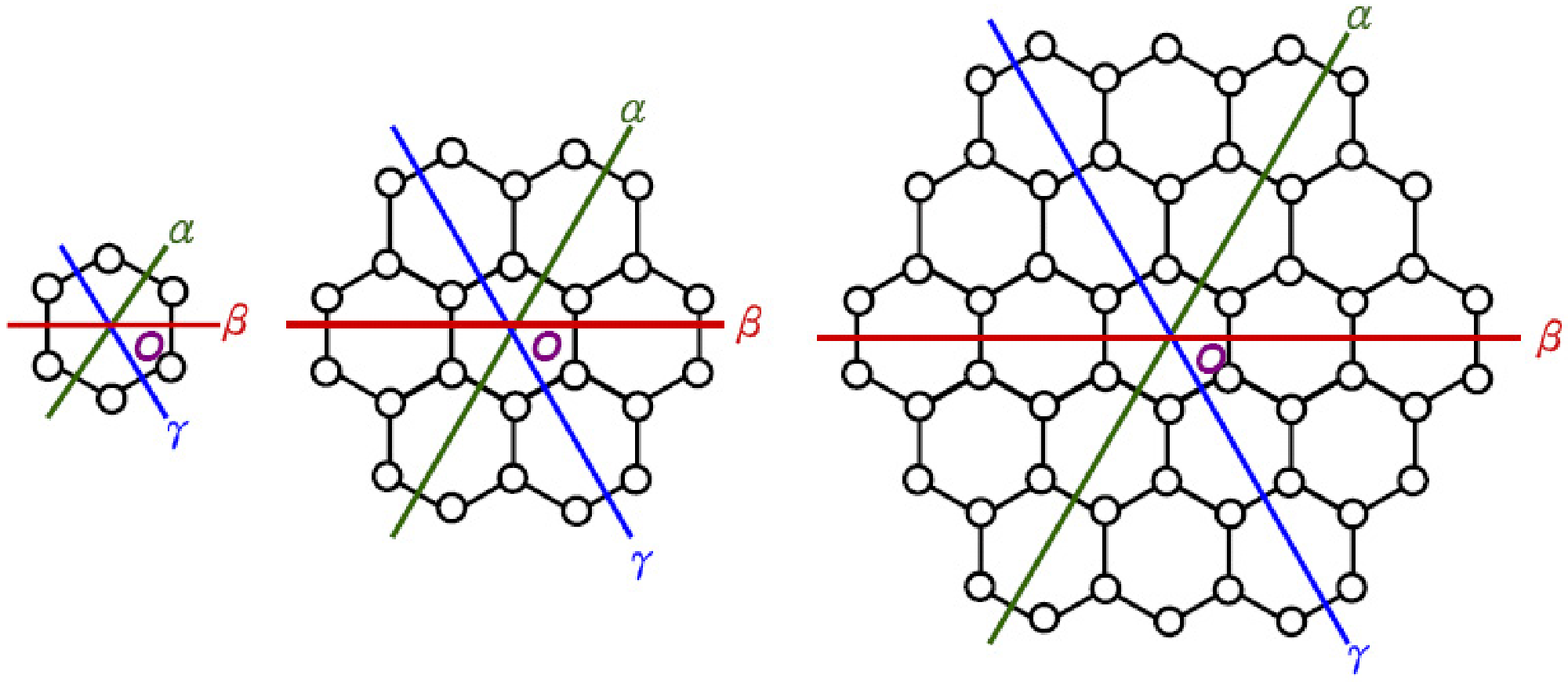}
 \begin{description}
    \item[Figure 1.] HM$_1$ (left), HM$_2$ (middle) and
    HM$_3$ (right).
 \end{description}



 \includegraphics[scale=0.35]{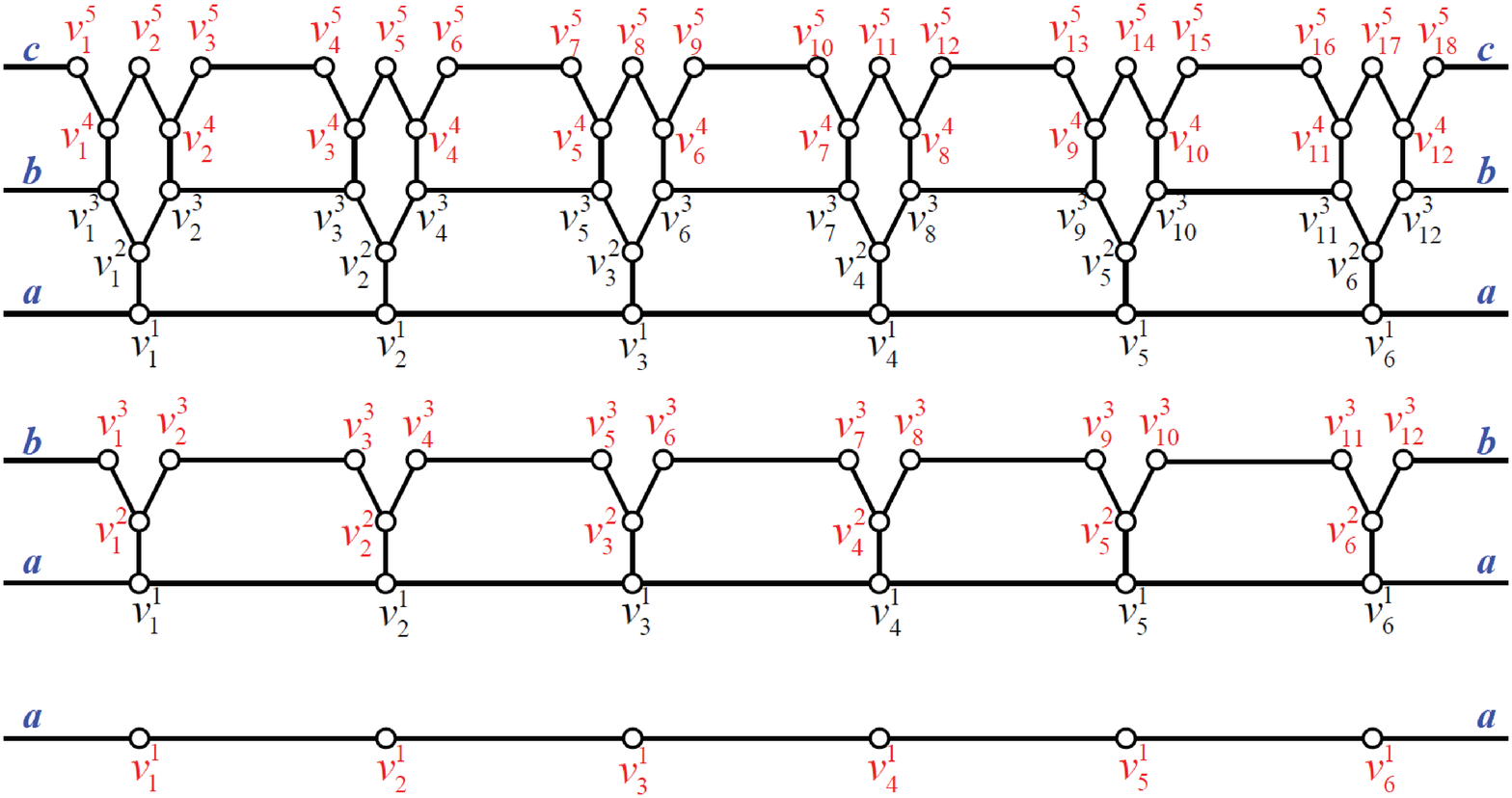}
 \begin{description}
    \item[Figure 2.] The castle drawing
    of a honeycomb mesh
    with a coordinate system on it: HM$_1$ (lower), HM$_2$ (middle) and
    HM$_3$ (upper). In each graph, the same edge is given the same label,
    that is, $v_1^1v_6^1, v_1^3v_{12}^3, v_1^5v_{18}^5$
    are edges.
 \end{description}

  %
  %
  %
  \begin{theorem}
  \label{main-min-seed-HM_t}
  {\rm min-seed}$({\rm HM}_t,\theta_>)={(3t^2+3t)/2}$.
  \end{theorem}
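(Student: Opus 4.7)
The plan is to sandwich $\mbox{min-seed}({\rm HM}_t,\theta_>)$ between matching lower and upper bounds, each equal to $(3t^2+3t)/2$.

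For the lower bound, I would invoke Lemma~\ref{lowerbound-for-target-set}. Every vertex of ${\rm HM}_t$ has degree $2$ or $3$, so the strict majority threshold $\theta_>(v)=\lceil(d(v)+1)/2\rceil$ equals $2$ for every vertex. Substituting $n=6t^2$, $m=9t^2-3t$, $\theta_V=2n=12t^2$, and $\theta_{\max}=2$ into the quantity $(\theta_V-m)/\theta_{\max}$ inside the maximum defining $\Lambda$ gives
\[
\frac{12t^2-(9t^2-3t)}{2}=\frac{3t^2+3t}{2},
\]
so $\mbox{min-seed}({\rm HM}_t,\theta_>)\ge (3t^2+3t)/2$.

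For the upper bound, I would exhibit an explicit target set $S$ with $|S|=(3t^2+3t)/2$ together with an ordering of the vertices of ${\rm HM}_t\setminus S$ satisfying the condition of Lemma~\ref{seq=para}(b). Using the addressing $v_j^i$ attached to the castle drawing of Figure~2, I would describe $S$ uniformly in $t$ by selecting a prescribed ``half'' of the vertices in each horizontal row of the castle drawing. The corresponding convinced sequence would then be built row by row from the top downward: when a white vertex $v_j^i$ is to be activated, two of its neighbours --- some lying in $S$ in the same row, and at most one already activated on the preceding row --- would already be black. The full convinced sequence would then be the concatenation (via $\sqcup$) of the individual row sequences.

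The main obstacle is that the lower bound leaves essentially no slack in the construction. Let $s_2$ and $s_3$ denote the numbers of degree-$2$ and degree-$3$ vertices in $S$, and note that ${\rm HM}_t$ has exactly $n_2=6t$ vertices of degree $2$ and $n_3=6t^2-6t$ vertices of degree $3$. The ordering condition of Lemma~\ref{seq=para}(b) forces every degree-$3$ vertex of $V({\rm HM}_t)\setminus S$ to have at most one successor and every degree-$2$ vertex to have none, so $|E({\rm HM}_t\setminus S)|\le n_3-s_3$. Combined with the identity $|E({\rm HM}_t\setminus S)|=m-2s_2-3s_3+|E({\rm HM}_t[S])|$ and the value $|S|=(3t^2+3t)/2$, this collapses to $|E({\rm HM}_t[S])|\le 0$. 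Hence $S$ must in fact be an independent set and every degree-$3$ white vertex must have exactly one later neighbour in the ordering. This rigidity pins the structure of $S$ down almost completely, and the remaining work is the finite verification --- particularly at the six corners of ${\rm HM}_t$ --- that the row-by-row choice of $S$ in castle coordinates meets all these constraints simultaneously.
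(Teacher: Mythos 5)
Your lower bound is correct and is the same as the paper's: since every vertex of ${\rm HM}_t$ has degree $2$ or $3$, $\theta_>\equiv 2$, and the term $(\theta_V-m)/\theta_{\max}$ you use coincides with the term $(n\theta_{\min}-m)/\theta_{\min}$ that the paper takes from Lemma~\ref{lowerbound-for-target-set}; both evaluate to $(3t^2+3t)/2$. Your rigidity computation is also correct as far as it goes: with $n_3=6t^2-6t$ and $|S|=(3t^2+3t)/2$ one gets $|E({\rm HM}_t[S])|\leq n_3+2|S|-m=0$, so any optimal target set meeting the bound must be independent and every degree-$3$ white vertex must have exactly one successor.

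The genuine gap is the upper bound, which is where all the content of the theorem lies: you never actually specify the target set $S$, and the rigidity analysis only constrains a hypothetical $S$ without producing one or certifying that one exists. Moreover, your description of ``selecting a prescribed half of the vertices in each horizontal row'' cannot be right as stated: half of all $6t^2$ vertices is $3t^2$, which exceeds $(3t^2+3t)/2$ for $t\geq 2$. The paper's construction instead takes half of each \emph{odd-indexed} row of the castle drawing and nothing from the even-indexed rows, namely $S=\cup_{k=1}^t\{v_1^{2k-1},v_3^{2k-1},\ldots,v_{6k-1}^{2k-1}\}$, giving $\sum_{k=1}^t 3k=3t(t+1)/2$ vertices, and then exhibits an explicit convinced sequence (activating the even rows left-to-right and the remaining odd-row vertices in an interleaved order). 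Without an explicit $S$ of the right cardinality together with a verified activation ordering as in Lemma~\ref{seq=para}(b), the inequality $\mbox{min-seed}({\rm HM}_t,2)\leq (3t^2+3t)/2$ is not established, so the proof is incomplete.
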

  \begin{proof}
  Let $G={\rm HM}_t$ and $\theta_{\min}=\min\{\theta_>(v):v\in V(G)\}$.
  Obviously, {\rm min-seed}$(G,\theta_>)={\mbox{ min-seed}}(G,2)$.
  Since HM$_t$ has $6t^2$ vertices and $9t^2-3t$ edges, by
  the result min-seed$(G,\theta_>)\geq {|V(G)|\theta_{\min}-|E(G)|\over \theta_{\min}}$
  presented in Lemma \ref{lowerbound-for-target-set}, we see at once that
  {\rm min-seed}$(G,2)\geq {(3t^2+3t)/2}$.

   Next we will show that {\rm min-seed}$(G,2)\leq {(3t^2+3t)/2}$
   by giving a target set $S$ of size ${(3t^2+3t)/2}$
   which influences all vertices of $V(G)\setminus S$ in $(G,2)$.
   Consider the castle drawing of $G$
   with an addressing scheme on vertices, as shown in Figure 2.
   For each positive integer $i$, define that
   $$V_i=\left\{ \begin{array}{ll}
                   \{v_1^i,v_2^i,\ldots,v_{3i}^i\} &\mbox{ if $i$ is even,}\\
                   \{v_1^i,v_2^i,\ldots,v_{3i+3}^i\} &\mbox{ if $i$ is odd.} \\
                 \end{array} \right.$$
   It can be seen that $V(G)=\cup_{i=1}^{2t-1}V_i$.
   Consider $S=\cup_{k=1}^t \{v_1^{2k-1},v_3^{2k-1},v_5^{2k-1},\ldots,v_{6k-1}^{2k-1}\}$
   as a target set for $(G,2)$ (see Figure 3 for a graphical illustration of this
   target set $S$).
   It is easy to check that $S$ can influence all vertices of
   $V(G)\setminus S$ by using the convinced sequence
   $\beta=\alpha_1\sqcup \beta_2$, where
   $\alpha_1=\{v_2^1,v_4^1,v_6^1\}$ and
   $\beta_2=\sqcup_{i=2}^t ([v_1^{2i-2},v_2^{2i-2},v_3^{2i-2},
        \ldots,v_{6(i-1)}^{2i-2}]
             \sqcup
             [v_2^{2i-1},v_4^{2i-1},v_6^{2i-1},
        \ldots,v_{6i}^{2i-1}])$
   (see Figure 1 in Appendix for a graphical illustration of this
   convinced sequence $\beta$).
  Since the cardinality of $S$ is
  $\sum_{k=1}^{t} {3k}= {3t(t+1)\over 2}$,
  we have {\rm min-seed}$({\rm HM}_t,2)\leq {(3t^2+3t)/2}$,
  which completes the proof of the theorem.
  \end{proof}

 \bigskip


 \includegraphics[scale=0.58]{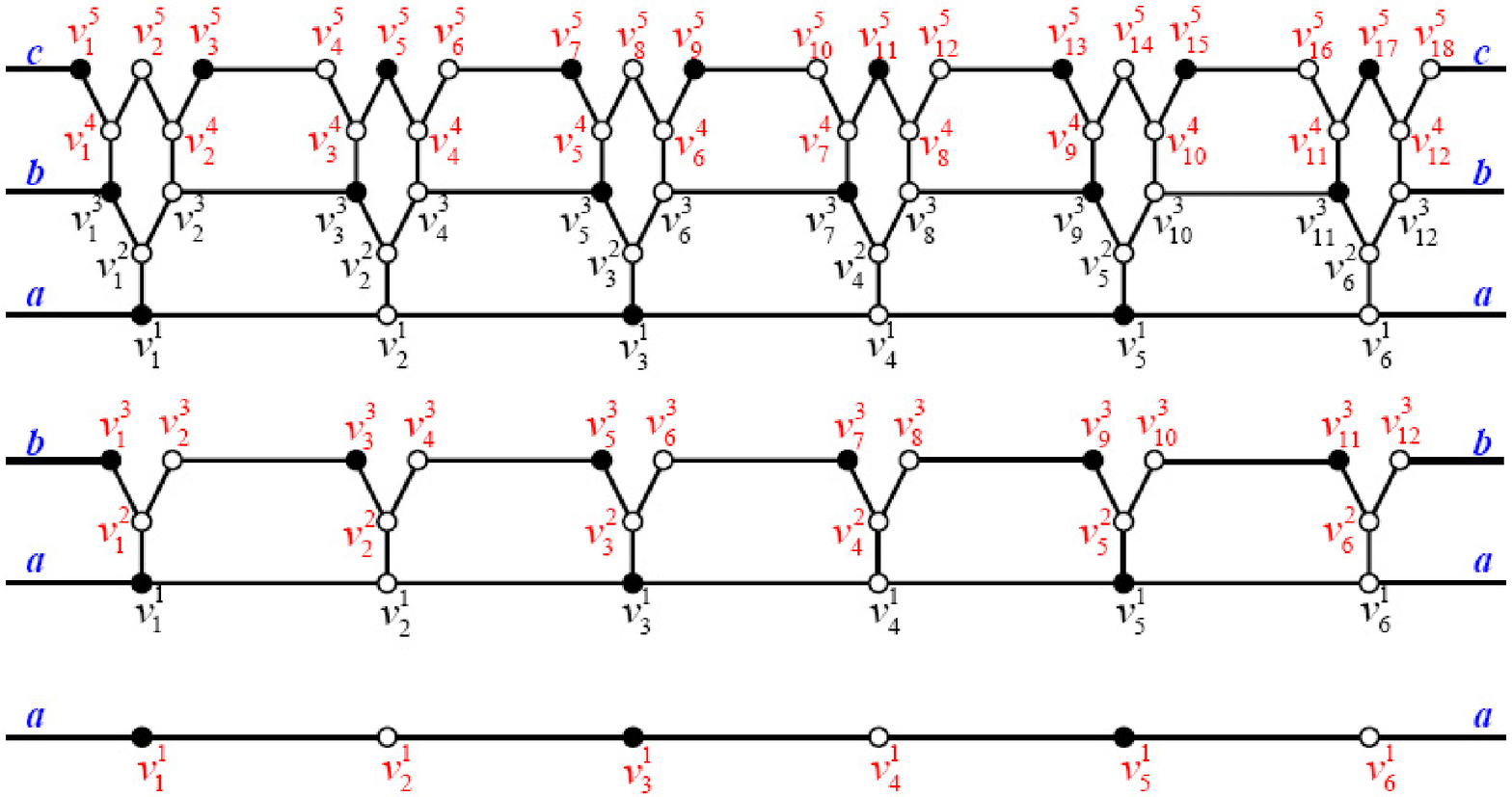}
 \begin{description}
    \item[Figure 3.]
    HM$_1$ (lower), HM$_2$ (middle) and
    HM$_3$ (upper),
    where the target set $S$ is the set of all black vertices.
 \end{description}

   %
   %
   %
   %
\section{Generalized honeycomb rectangular torus}
\label{GHRT}

  In this section, under strict majority threshold model,
  we study the problem of computing
  optimal target sets for three
  well-known honeycomb tori:
  honeycomb torus,
  honeycomb rectangular torus, and
  honeycomb rhombic torus.
  Actually, we will tackle this problem by considering
  a slightly more general class
  of network topologies called generalized honeycomb rectangular torus.

   %
   %
  The {\em honeycomb torus} of size $t$ introduced in \cite{Stojmenovic1997},
  denoted by HT$_t$,
  is obtained from a honeycomb mesh of size $t$ by
  joining the pairs of degree $2$ vertices in HM$_t$
  that are mirror symmetric with respect to the three axes
  $\alpha,\beta,\gamma$ of the HM$_t$ (see Figure 1 for the
  three axes of a honeycomb mesh).
  Figure 4 shows how to wraparound HM$_1$, HM$_2$ and HM$_3$
  to obtain HT$_1$, HT$_2$ and HT$_3$, respectively.

 \includegraphics[scale=0.7]{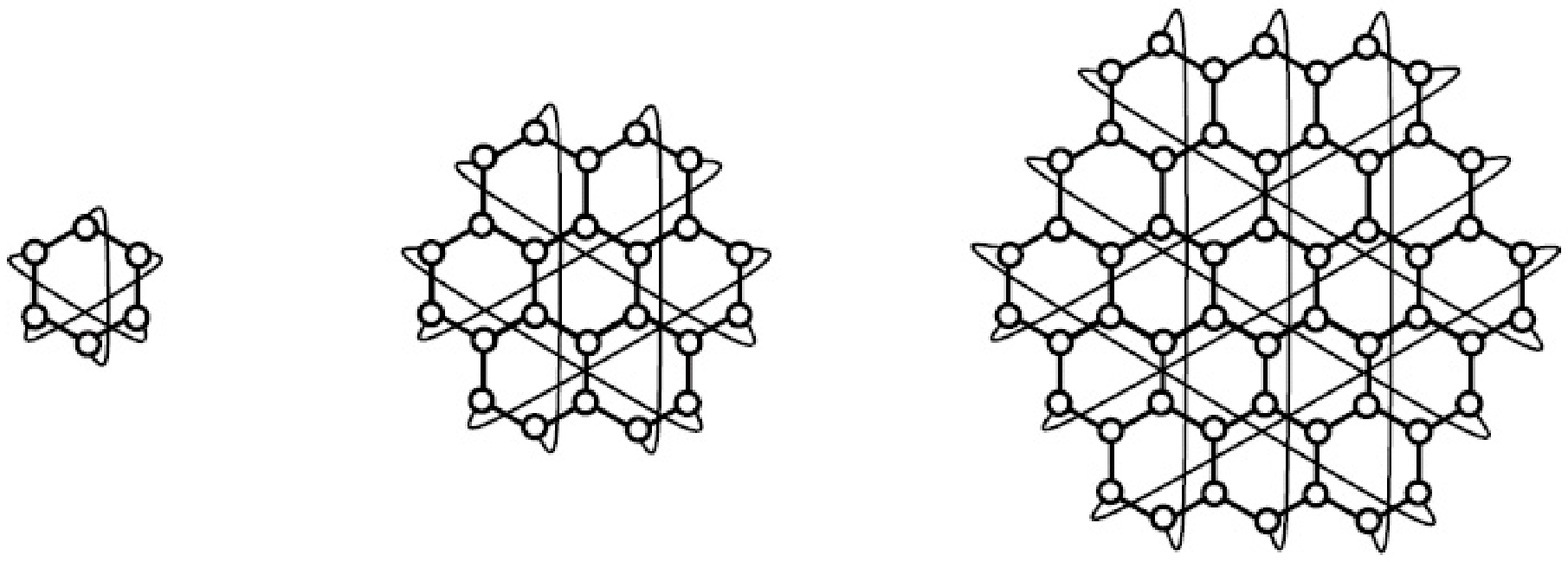}
 \begin{description}
    \item[Figure 4.] HT$_1$ (left), HT$_2$ (middle) and HT$_3$ (right).
 \end{description}

 Let $m$ and $n$ be positive
 even integers such that $n\geq 4$.
 The {\em honeycomb rectangular torus} ${\rm HReT}(m,n)$,
 introduced by Stojmenovic \cite{Stojmenovic1997}
 (see also \cite{Cho2003-GHT,Parhami2001}), is the graph
 with the
 vertex set $\{(i,j): 0\leq i<m, 0\leq j<n\}$
 such that  $(i,j)$ and $(k,\ell)$ are adjacent if and only if they
 satisfy one of the following conditions:

 \begin{enumerate}
    \item $i=k$ and $j=\ell\pm 1 \pmod n$;
    \item $j=\ell$ and $k=i-1 \pmod m$ if $i+j$ is even.
 \end{enumerate}
 For example, consider Figure 5(left) which depicts ${\rm
 HReT}(4,6)$.
 Note that our notation for ${\rm HReT}(m,n)$
  is slightly different from the one used by
 Stojmenovic in \cite{Stojmenovic1997}.

 \begin{center}
 \includegraphics[scale=0.7]{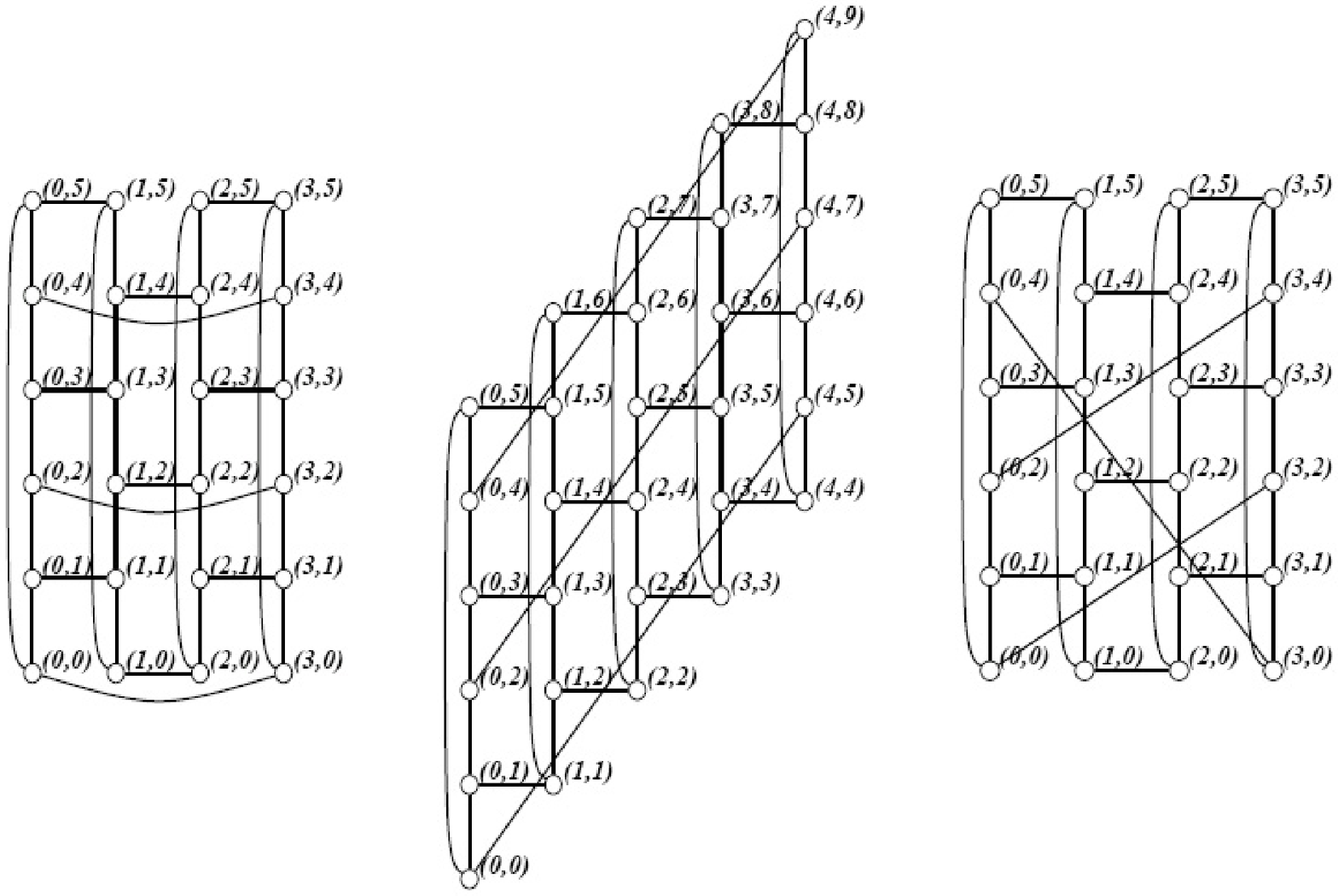}
 \end{center}
 \begin{description}
    \item[Figure 5.]
    ${\rm HReT}(4,6)$ (left), ${\rm HRoT}(5,6)$ (middle), and
    ${\rm GHT}(4,6,2)$ (right).
 \end{description}

 Let $m$ and $n$ be positive integers such that $n$ is even.
 The {\em honeycomb rhombic torus} ${\rm HRoT}(m,n)$,
 introduced by Stojmenovic \cite{Stojmenovic1997}
 (see also \cite{Cho2003-GHT,Yang2004}), is the graph
 with the vertex set $\{(i,j): 0\leq i<m, 0\leq j-i<n\}$
 such that  $(i,j)$ and $(k,\ell)$ are adjacent if and only if they
 satisfy one of the following conditions:
 \begin{enumerate}
    \item $i=k$ and $j=\ell\pm 1 \pmod n$;
    \item $j=\ell$ and $k=i-1$ if $i+j$ is even; and
    \item $i=0$, $k=m-1$, and $\ell=j+m$ if $j$ is even.
 \end{enumerate}
  For example, consider Figure 5(middle) which depicts
  ${\rm HRoT}(5,6)$.
   Note that our notation ${\rm HRoT}(m,n)$ for a
   honeycomb rhombic torus
   is different from the one used in \cite{Stojmenovic1997,Yang2004}.

 In \cite{Cho2003-GHT} Cho and Hsu introduced a
 class of generalized honeycomb tori which
 cover the three honeycomb tori mentioned above.
 Let $m$ and $n$ be positive integers such that $n\geq 4$ is even.
 Let $d$ be any nonnegative integer such that $m-d$ is an even number.
 The {\em generalized honeycomb rectangular torus}
 (or {\em generalized honeycomb torus}), denoted by ${\rm
 GHT}(m,n,d)$ and
 proposed by Cho and Hsu \cite{Cho2003-GHT},
 is the graph
 with the
 vertex set $\{(i,j): 0\leq i<m, 0\leq j<n\}$
 such that  $(i,j)$ and $(k,\ell)$ are adjacent if and only if they
 satisfy one of the following conditions:

 \begin{enumerate}
    \item $i=k$ and $j=\ell\pm 1 \pmod n$;
    \item $j=\ell$ and $k=i-1$ if $i+j$ is even; and
    \item $i=0$, $k=m-1$, and $\ell=j+d \pmod n$ if $j$ is even.
 \end{enumerate}
  For example, Figure 5(right) depicts a ${\rm GHT}(4,6,2)$.
  We remark that, in \cite{Alspach-Dean2009}, the authors call
  ${\rm GHT}(m,n,d)$ the {\em honeycomb toroidal graph}.

  Now, given a generalized honeycomb rectangular
  torus $G$, in the proof of Theorem \ref{main-min-seed-GHT(m,n,d)},
  we shall show how to compute an optimal target set for $G$ under
  strict majority threshold model.

  %
  %
  %
  \begin{theorem}
  \label{main-min-seed-GHT(m,n,d)}
  If $G$ is a generalized honeycomb rectangular torus ${\rm GHT}(m,n,d)$,
  then $\mbox{\rm min-seed}(G,\theta_>)$$=\left\lceil{(mn+2 )/4}\right\rceil$.
  \end{theorem}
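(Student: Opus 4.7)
The plan is to match the two natural bounds. Since ${\rm GHT}(m,n,d)$ is $3$-regular on $mn$ vertices with $3mn/2$ edges and $\theta_>(v)\equiv 2$, we have $\Delta=3$ and $\delta=\max\{d_G(v)-\theta_>(v)\}=1$, so the first term in Lemma \ref{lowerbound-for-target-set} yields
$$\mbox{min-seed}(G,\theta_>)\ \ge\ \frac{3mn/2-(mn-1)}{3-1}\ =\ \frac{mn+2}{4}.$$
Since the left side is an integer, $\mbox{min-seed}(G,\theta_>)\ge \lceil (mn+2)/4\rceil$.

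For the matching upper bound I would exhibit a target set $S$ of size exactly $\lceil(mn+2)/4\rceil$ and invoke Lemma \ref{seq=para}(b): because $d_G(v)-\theta_>(v)=1$ for every $v$, it suffices to order $V(G)\setminus S$ so that each vertex has at most one successor, which is equivalent to $G\setminus S$ being acyclic (a forest). Thus the whole upper bound reduces to producing a feedback vertex set of ${\rm GHT}(m,n,d)$ of the prescribed size; this simultaneously yields the by-product advertised in the introduction. The construction is periodic: working in the coordinates $\{(i,j):0\le i<m,\ 0\le j<n\}$, I would take $S$ to consist of one appropriately chosen vertex in every $2\times 2$ block of the ``brick-wall'' layout, together with one or two extra vertices to absorb the four residue cases of $mn\bmod 4$. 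Counting gives $|S|=\lceil(mn+2)/4\rceil$, and the components of $G\setminus S$ can be arranged to be short paths, which are trivially orderable.

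The main obstacle is the wraparound: the $n$-cycles inside each row and, more seriously, the shifted identification $(0,j)\sim(m-1,j+d\bmod n)$ for even $j$. A uniform periodic pattern that is acyclic on the infinite hexagonal strip may well close into a cycle under either of these identifications, and the shift $d$ (constrained only by the parity condition $m-d$ even) means the vertical wraparound interacts nontrivially with the chosen pattern. My plan is to fix a base pattern that is a forest on the strip, then adjust along the last row (and if necessary along the seam of the horizontal wraparound) by swapping at most one or two vertices in $S$; the parity condition $m-d\equiv 0\pmod 2$ guarantees that such a local repair can always be carried out while preserving $|S|=\lceil(mn+2)/4\rceil$. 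Verification is then a finite inspection of the components of $G\setminus S$.
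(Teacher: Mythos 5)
Your lower bound is exactly the paper's: the term $\frac{|E|-(|V|-1)\delta}{\Delta-\delta}$ of Lemma \ref{lowerbound-for-target-set} with $|V|=mn$, $|E|=3mn/2$, $\Delta=3$, $\delta=1$ gives $(mn+2)/4$, hence $\lceil(mn+2)/4\rceil$ after rounding. Your reduction of the upper bound to finding a feedback vertex set of size $\lceil(mn+2)/4\rceil$ is also sound (it is the content of Proposition 1 of Dreyer--Roberts, which the paper invokes in Corollary \ref{main-corollary}): in a cubic graph with threshold $2$, an ordering in which every vertex has at most one successor exists precisely when $G\setminus S$ is a forest.

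The gap is that the construction itself is never produced, and the part you defer is exactly the hard part. Saying you will pick ``one appropriately chosen vertex in every $2\times 2$ block'' plus ``one or two extra vertices,'' and that the seam created by the identification $(0,j)\sim(m-1,j+d\bmod n)$ can always be mended by ``swapping at most one or two vertices,'' is an unproven assertion, and the paper's proof shows it is optimistic. The paper needs three separate constructions (for $m$ even; $m$ odd with $n\equiv 0\pmod 4$; $m$ odd with $n\equiv 2\pmod 4$), and in the last case the portion of the seed set in column $m-1$ is $S_3=\{(m-1,d-1),(m-1,d+3),\dots,(m-1,d+4t-1)\}$ --- i.e.\ the entire pattern in the last column is translated by the shift $d$, not repaired locally at one or two vertices. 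The parity condition on $m-d$ by itself does not ``guarantee'' that a fixed base pattern survives the wraparound after an $O(1)$ edit; one must exhibit, for each residue class, an explicit $S$ of size exactly $\lceil(mn+2)/4\rceil$ and verify (as the paper does via explicit convinced sequences) that $G\setminus S$ is acyclic. Until that is done, the upper bound is a plausible plan rather than a proof.
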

  \begin{proof}
  Let $G={\rm GHT}(m,n,d)$ and
  $\delta=\max\{d_G(v)-\theta_>(v): v\in V(G)\}$.
  Let $\Delta$ be the maximum degree of $G$.
  Obviously, $G$ is a $3$-regular graph. It follows that
  {\rm min-seed}$(G,\theta_>)={\mbox{ min-seed}}(G,2)$.
  Since $G$ has $mn$ vertices and ${3mn\over 2}$ edges, by
  the result min-seed$(G,\theta_>)\geq
  {|E(G)|-(|V(G)|-1)\delta \over \Delta-\delta}$
  presented in Lemma \ref{lowerbound-for-target-set}, we see at once that
  {\rm min-seed}$(G,2)\geq \left\lceil{(mn+2)/ 4}\right\rceil$.

  Next, we shall prove that
  {\rm min-seed}$(G,2)\leq \left\lceil{(mn+2)/ 4}\right\rceil$
  by giving a target set $S$ for $(G,2)$ which influences all
  vertices in $V(G)\setminus S$ and has cardinality
  $\left\lceil{(mn+2)/ 4}\right\rceil$.
  Note that $n\geq 4$ is even.
  We let $n=4t+r$, where $t$ is a positive integer and $r\in \{0,2\}$.
  The proof is divided into three cases, according to the parity of $m$
  and the value of $r$.

  {\bf Case 1.} $m$ is even.
  Let $S_1=\cup_{j=0}^{(n-4)/2}\{(0,2j),(2,2j),(4,2j),\ldots,(m-2,2j)\}$
  and $S_2=\{(1,n-1),(3,n-1),(5,n-1),\ldots, (m-1,n-1)\}$.
  Consider
   $S=S_1\cup S_2\cup \{(0,n-2)\}$
   as a target set for $(G,2)$
   (see Figure 6 for a graphical illustration of $S$).
   Note that, in this case, $d$ is even.
   By the definition of ${\rm GHT}(m,n,d)$
   and by the choice of $S$, it can be seen that
   if $\ell$ is even, then the
   vertex $(m-1,\ell)$ is adjacent to a vertex $(0,j)$ in $S$
   such that $j$ is even.
   With this observation,
   it is straightforward to check that $S$ can influence all vertices of
   $V(G)\setminus S$ by using the convinced sequence
   $\alpha=\alpha_1 \sqcup \alpha_2 $
   (see Figure 2 in Appendix for a graphical illustration of this
   convinced sequence $\alpha$), where
   $\alpha_1=[(0,n-1),(0,n-3),(0,n-5),\ldots,(0,1)]
     \sqcup  [(1,0),(1,1),(1,2),\ldots,(1,n-2)]$ and
   $\alpha_2=\sqcup_{i=1}^{{m \over 2}-1} ([(2i,n-1),(2i,n-2)]
     \sqcup [(2i,n-3),(2i,n-5),(2i,n-7),\ldots,(2i,1)]
     \sqcup [(2i+1,0),(2i+1,1),(2i+1,2),\ldots,(2i+1,n-2)])$.
   Since
   $|S|={mn\over 4}+1=\left\lceil{(mn+2)/ 4}\right\rceil$,
   we obtain the desired inequality
   {\rm min-seed}$(G,2)\leq \left\lceil{(mn+2)/ 4}\right\rceil$.

   {\bf Case 2.} $m$ is odd and $r=0$.
  Let
  $S_1=\cup_{j=0}^{(n-4)/2}\{(0,2j),(2,2j),(4,2j),\ldots,(m-3,2j)\}$,
  $S_2=\{(1,n-1),(3,n-1),(5,n-1),\ldots, (m-2,n-1)\}$,
  and $S_3=\{(m-1,0),(m-1,4),(m-1,8),\ldots,(m-1,n-4)\}$.
  Consider
   $S=S_1\cup S_2\cup S_3\cup \{(0,n-2)\}$
   as a target set for $(G,2)$
   (see Figure 7 for a graphical illustration of $S$).
   Note that, in this case, $d$ is odd.
   By the definition of ${\rm GHT}(m,n,d)$
   and by the choice of $S$, we see that
   if $\ell$ is odd, then the
   vertex $(m-1,\ell)$ is adjacent to a vertex $(0,j)$ in $S$
   such that $j$ is even.
   With the above in mind, it is straightforward to check that $S$
   can influence all vertices of
   $V(G)\setminus S$ by using the convinced sequence
   $\alpha=\alpha_1 \sqcup \alpha_2 \sqcup \alpha_3 \sqcup \alpha_4$
   (see Figure 3 in Appendix for a graphical illustration of this
   convinced sequence $\alpha$), where
   $\alpha_1=[(m-1,1),(m-1,3),(m-1,5),\ldots,(m-1,n-1)]$,
   $\alpha_2=[(m-1,2),(m-1,6),(m-1,10),(m-1,14),\ldots,(m-1,n-6),(m-1,n-2)]$,
   $\alpha_3=[(0,n-1),(0,n-3),(0,n-5),\ldots,(0,1)]
     \sqcup  [(1,0),(1,1),(1,2),\ldots,(1,n-2)]$, and
   $\alpha_4=\sqcup_{i=1}^{{m-3 \over 2}} ([(2i,n-1),(2i,n-2)]
     \sqcup [(2i,n-3),(2i,n-5),(2i,n-7),\ldots,(2i,1)]
     \sqcup [(2i+1,0),(2i+1,1),(2i+1,2),\ldots,(2i+1,n-2)])$.
   Since $n\equiv 0 \pmod 4$, we have
   $|S|={mn\over 4}+1=\left\lceil{(mn+2)/ 4}\right\rceil$, and hence
   {\rm min-seed}$(G,2)\leq \left\lceil{(mn+2)/ 4}\right\rceil$.

   {\bf Case 3.} $m$ is odd and $r=2$.
   In the following proof, the second coordinate of a vertex $(a,b)$ in $G$
   is read modulo $n$, for example
   we have $(m-1,d+4t-1)=(m-1,d-3)$.
  Let
  $S_1=\cup_{j=0}^{(n-4)/2}\{(0,2j),(2,2j),(4,2j),\ldots,(m-3,2j)\}$,
  $S_2=\{(1,n-1),(3,n-1),(5,n-1),\ldots, (m-2,n-1)\}$,
  and $S_3=\{(m-1,d-1),(m-1,d+3),(m-1,d+7),\ldots,(m-1,d+4t-1)\}$.

  By the definition of ${\rm GHT}(m,n,d)$, we see that
  $(m-1,d),(m-1,d+2),(m-1,d+4),\ldots,(m-1,d+4t-2)$ are adjacent to
  vertices $(0,0),(0,2),(0,4),\ldots, (0, 4t-2)$, respectively, and
  the vertex $(m-1,d-2)$ is adjacent to both $(m-1,d-1)$ and $(m-1,d-3)$.
  Note that $\{(0,0),(0,2),(0,4),\ldots, (0, 4t-2)\}\subseteq S_1$
  and $\{(m-1,d-1),(m-1,d-3)\}\subseteq S_3$.
  Consider
   $S=S_1\cup S_2\cup S_3$
   as a target set for $(G,2)$
   (see Figure 8 for a graphical illustration of $S$).
   By the above observation and by the choice of $S$,
   it is straightforward to check that $S$ can influence all vertices of
   $V(G)\setminus S$ by using the convinced sequence
   $\alpha=\alpha_1 \sqcup \alpha_2 \sqcup \alpha_3$
   (see Figure 4 in Appendix for a graphical illustration of this
   convinced sequence $\alpha$), where
   $\alpha_1=[(m-1,d-2),(m-1,d),(m-1,d+2),(m-1,d+4),\ldots,(m-1,d+4t-2)]$,
   $\alpha_2=[(m-1,d+1),(m-1,d+5),(m-1,d+9),(m-1,d+13),\ldots,(m-1,d+4t-3)]$,
   and
   $\alpha_3=\sqcup_{i=0}^{{(m-3)/ 2}} ([(2i,n-1),(2i,n-2)]
     \sqcup [(2i,n-3),(2i,n-5),(2i,n-7),\ldots,(2i,1)]
     \sqcup [(2i+1,0),(2i+1,1),(2i+1,2),\ldots,(2i+1,n-2)])$.
   Since
   $|S|={n(m-1)\over 4}+t+1={mn+2 \over 4}=\left\lceil{(mn+2)/ 4}\right\rceil$, we see that
   {\rm min-seed}$(G,2)\leq \left\lceil{(mn+2)/ 4}\right\rceil$.
   This completes the proof of the theorem.
  \end{proof}

 \begin{center}
 \includegraphics[scale=0.55]{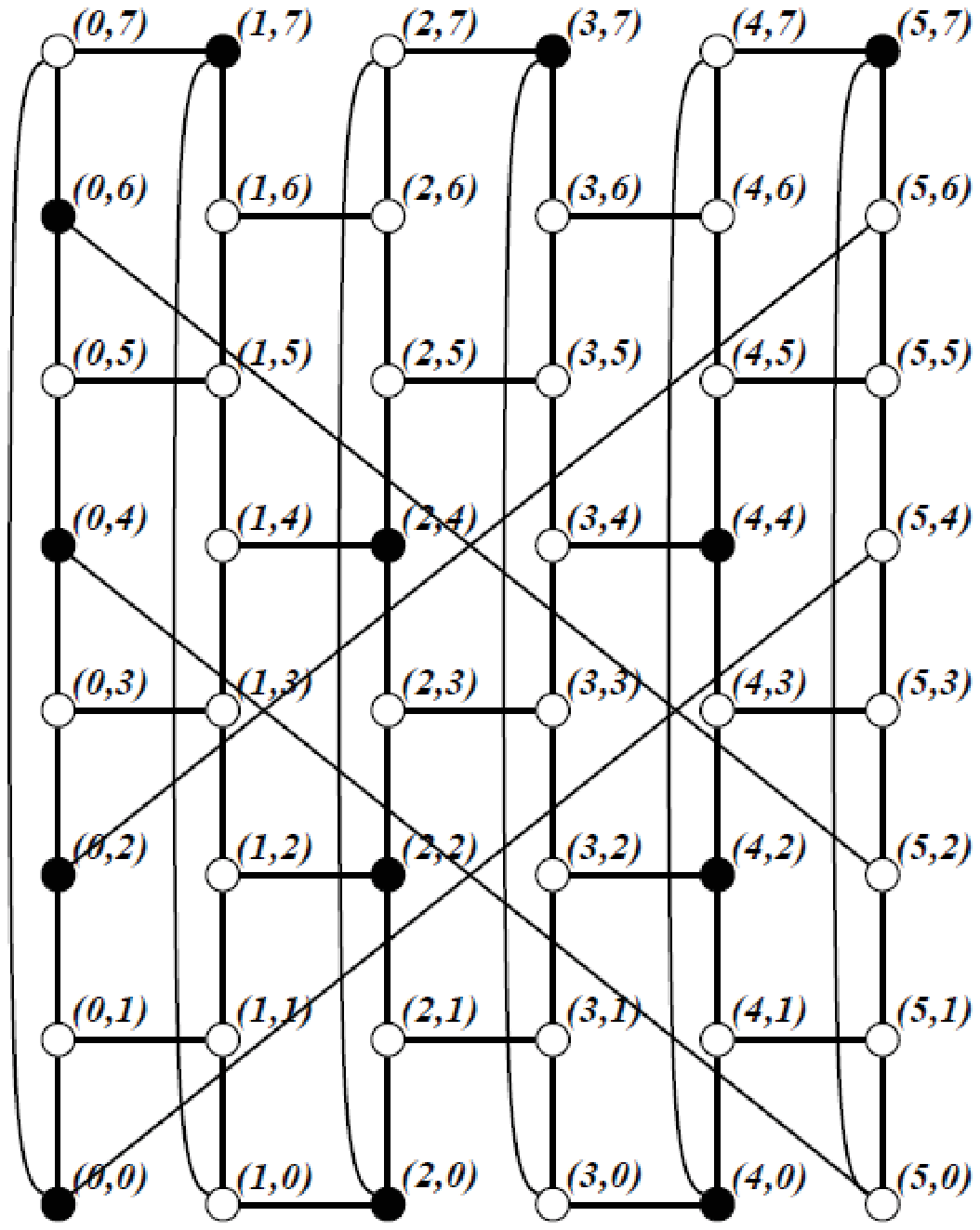}
 \end{center}
 \begin{description}
    \item[Figure 6.] ${\rm GHT}(6,8,4)$
    where the target set $S$ is the set of all black vertices.
 \end{description}
 \begin{center}
 \includegraphics[scale=0.55]{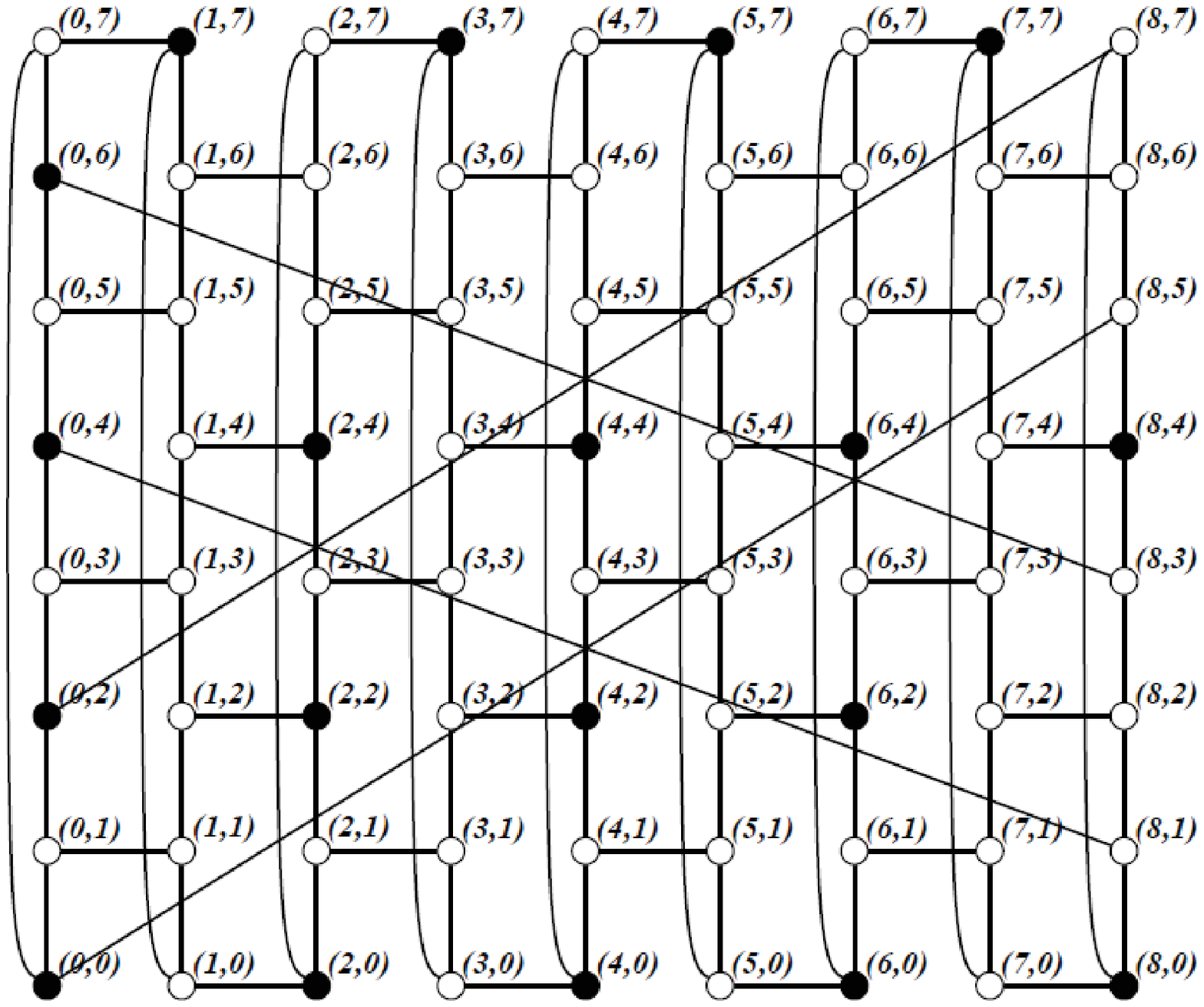}
 \end{center}
 \begin{description}
    \item[Figure 7.] ${\rm GHT}(9,8,5)$
    where the target set $S$ is the set of all black vertices.
 \end{description}
 \begin{center}
 \includegraphics[scale=0.55]{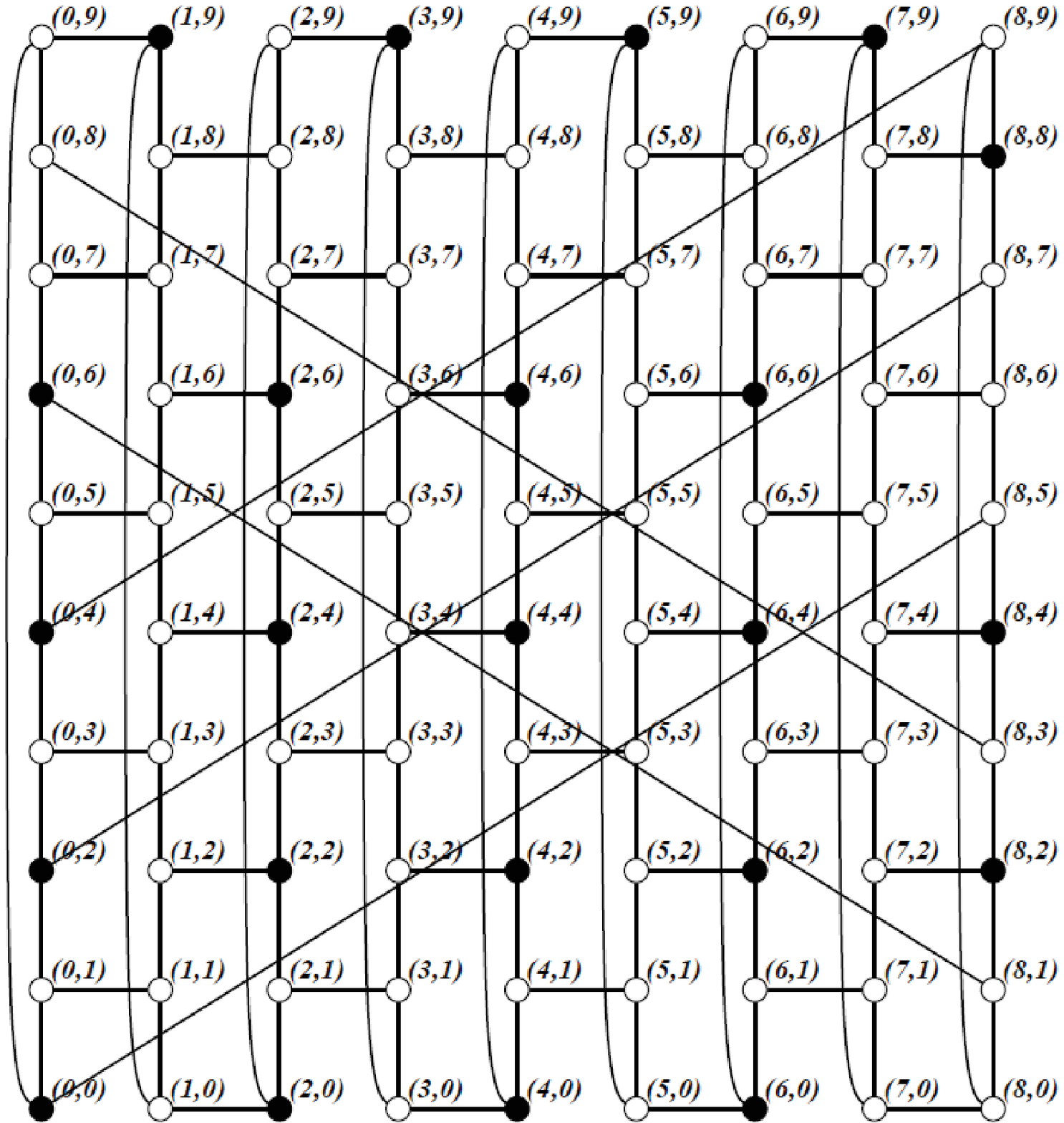}
 \end{center}
 \begin{description}
    \item[Figure 8.] ${\rm GHT}(9,10,5)$
    where the target set $S$ is the set of all black vertices.
 \end{description}

  From the definitions of the honeycomb rectangular torus,
  the honeycomb rhombic torus, and the generalized honeycomb
  rectangular torus, it can readily be seen that
  ${\rm HReT}(m,n)$ is isomorphic to ${\rm GHT}(m,n,0)$
  and
  ${\rm HRoT}(m,n)$ is isomorphic to ${\rm GHT}(m,n,m\, (\bmod\,
  n))$.
  In \cite{Cho2003-GHT}, Cho and Hsu proved that
  the honeycomb torus of size $t$
  is isomorphic to ${\rm GHT}(t,6t,3t)$.
  Now the following corollary follows immediately from
  Proposition 1 of \cite{Dreyer+Roberts},
  Theorem \ref{main-min-seed-GHT(m,n,d)} and the above
  discussion.

  %
  %
  %
  \begin{corollary}
  \label{main-corollary}
  $(1)$ If $G$ is a generalized honeycomb rectangular torus
  ${\rm GHT}(m,n,d)$,
  then the decycling number $\nabla(G)=\left\lceil{(mn+2 )/4}\right\rceil$.
  $(2)$ If $G$ is a honeycomb torus ${\rm HT}_t$ then
  {\rm min-seed}$(G,\theta_>)=\nabla(G)= \lceil{(3t^2+1)/ 2}\rceil$.
  $(3)$ If $G$ is a honeycomb rectangular torus ${\rm HReT}(m,n)$ or a
  honeycomb rhombic torus ${\rm HRoT}(m,n)$, then
  {\rm min-seed}$(G,\theta_>)=\nabla(G)= \lceil{(mn+2)/ 4}\rceil$.
  \end{corollary}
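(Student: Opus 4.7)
The plan is to package three already-established facts. First, every generalized honeycomb rectangular torus $G={\rm GHT}(m,n,d)$ is $3$-regular, so the strict majority threshold is the constant function $\theta_>(v)=\lceil(d_G(v)+1)/2\rceil=2$; consequently min-seed$(G,\theta_>)$ = min-seed$(G,2)$. Second, Proposition~1 of \cite{Dreyer+Roberts}, recalled just before Section~\ref{pre}, asserts that in a $(k+1)$-regular graph $G$, a subset $S\subseteq V(G)$ influences every other vertex in the social network $(G,k)$ if and only if $S$ is a feedback vertex set of $G$; applied with $k=2$ to our $3$-regular $G$, this biconditional (being stated for arbitrary $S$) immediately upgrades to an equality of minimum sizes, giving min-seed$(G,2)=\nabla(G)$. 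Third, Theorem~\ref{main-min-seed-GHT(m,n,d)} evaluates min-seed$(G,\theta_>)=\lceil(mn+2)/4\rceil$. Chaining these three equalities delivers part~(1).

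For parts~(2) and~(3), the plan is to invoke part~(1) through the isomorphisms stated just before the corollary. Since ${\rm HT}_t\cong{\rm GHT}(t,6t,3t)$, substituting $m=t$ and $n=6t$ in the formula $\lceil(mn+2)/4\rceil$ and simplifying yields $\lceil(6t^2+2)/4\rceil=\lceil(3t^2+1)/2\rceil$, which is part~(2). Since ${\rm HReT}(m,n)\cong{\rm GHT}(m,n,0)$ and ${\rm HRoT}(m,n)\cong{\rm GHT}(m,n,m\bmod n)$, part~(1) applies verbatim with the same formula $\lceil(mn+2)/4\rceil$ for both, giving part~(3). All three graphs are $3$-regular, so the identities min-seed$(\cdot,\theta_>)=$ min-seed$(\cdot,2)=\nabla(\cdot)$ are inherited automatically from the argument for part~(1).

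There is essentially no obstacle here: the corollary is bookkeeping. The only substantive points to keep in mind are that the Dreyer--Roberts equivalence is a statement about \emph{arbitrary} influencing sets, which is what lets us pass from the equivalence on subsets to an equality between min-seed and the decycling number, and the elementary arithmetic simplification in part~(2). The proof should therefore be presented as a short chain of references: $3$-regularity, Proposition~1 of \cite{Dreyer+Roberts}, Theorem~\ref{main-min-seed-GHT(m,n,d)}, and the isomorphisms recalled just above the corollary.
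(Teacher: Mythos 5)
Your proposal is correct and follows essentially the same route as the paper: the authors likewise derive the corollary immediately from Proposition~1 of Dreyer--Roberts, Theorem~\ref{main-min-seed-GHT(m,n,d)}, and the isomorphisms ${\rm HT}_t\cong{\rm GHT}(t,6t,3t)$, ${\rm HReT}(m,n)\cong{\rm GHT}(m,n,0)$, and ${\rm HRoT}(m,n)\cong{\rm GHT}(m,n,m\bmod n)$ recalled just before the statement. You have merely written out the bookkeeping (the reduction of $\theta_>$ to the constant threshold $2$ on a $3$-regular graph and the substitution $m=t$, $n=6t$) that the paper leaves implicit.
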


   %
   %
   %
   %
   \section{Hexagonal grids}
   \label{sec-hexagonal-grids}
   In this section, under strict majority threshold model,
  we study the problem of computing
  optimal target sets for a graph $G$ which has an underlying
  hexagonal (or honeycomb) grid structure.
  Let $m$ and $n$ be two integers such that $m\geq 2$, $n\geq 4$, and $n$ even.
   %
   %
  An {\em $m$ by $n$ planar hexagonal grid}, denoted by
  ${\rm PHG}(m,n)$, consists of an array of
  $n$ rows of $m$ vertices $(x,y)$, with $0\leq x\leq m-1$, $0\leq y\leq n-1$,
  arranged on a standard Cartesian plane such that each vertex $(x,y)$ is adjacent to
  $(x,y+1)$ and, if $y$ is even (zero is considered to be even),
  also adjacent to $(x+1,y+1)$, provided that each coordinate
  is within its allowed range and no vertex of degree one is generated.
  As an example, PHG(5,8) is depicted in Figure 9.

   %
   %
  An {\em $m$ by $n$ cylindrical hexagonal grid} ${\rm CHG}(m,n)$
  is obtained from the $m$ by $n$ planar hexagonal grid ${\rm PHG}(m,n)$
  by adding the edges from $(m-1, y)$ to $(0,y+1)$ for any even $y$.
  In other words, a ${\rm CHG}(m,n)$ is defined the same as
  a ${\rm PHG}(m,n)$
  except that for each vertex $(x,y)$ the addition in the first
  coordinate is taken modulo $m$.
  As an example, CHG(5,8) is depicted in Figure 9.

   %
   %
  An {\em $m$ by $n$ toroidal hexagonal grid}, denoted by
  ${\rm THG}(m,n)$, is defined the same as
  a ${\rm PHG}(m,n)$
  except that for each vertex $(x,y)$ addition in the first
  coordinate is taken modulo $m$ and
  addition in the second coordinate is taken modulo $n$.
  As an example, THG(5,8) is depicted in Figure 9.

 \includegraphics[scale=0.6]{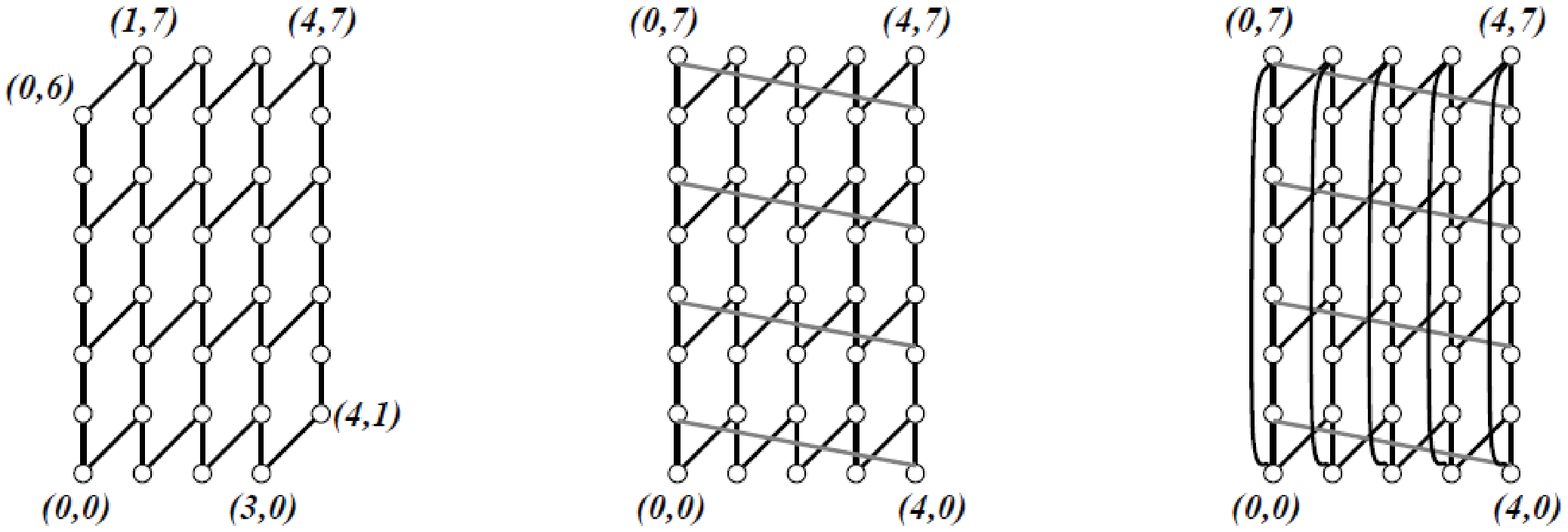}
 \begin{description}
    \item[Figure 9.] ${\rm PHG}(5,8)$ (left), ${\rm CHG}(5,8)$
    (middle), and ${\rm THG}(5,8)$ (right).
 \end{description}

 In Theorem 3.3 of \cite{Adams-Hexagonal-Grids2011},
 Adams et al. showed that if $G$ is an $m$ by $n$ planar hexagonal
 grid then
 {\rm min-seed}$(G,\theta_{\geq})=
 \left\lceil{(n-2)(m-1)\over 4}\right\rceil$.
 Below we consider an $m$ by $n$ planar hexagonal
 grid equipped with a strict majority threshold $\theta_>$ and
 determine its optimal target set.

  %
  %
  %
  \begin{theorem}
  \label{main-min-seed-PHG(m,n)}
  If $G$ is an $m$ by $n$ planar hexagonal
  grid, then
  {\rm min-seed}$(G,\theta_>)=
  \left\lceil{(mn+2m+n)/ 4}\right\rceil -1$.
  \end{theorem}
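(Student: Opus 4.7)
The proof follows the two-part template used for Theorems~\ref{main-min-seed-HM_t} and \ref{main-min-seed-GHT(m,n,d)}: a lower bound from Lemma~\ref{lowerbound-for-target-set} and a matching upper bound produced by an explicit target set together with a convinced sequence. Set $G={\rm PHG}(m,n)$.

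For the lower bound I would first classify the vertices of $G$ by degree. A direct case check on the definition shows that exactly two vertices have degree one, namely the corners $(m-1,0)$ and $(0,n-1)$; that the other vertices of row $0$ and of row $n-1$, together with the right-most vertex of each even interior row and the left-most vertex of each odd interior row, form a set of $n+2m-4$ degree-two vertices; and that the remaining $(m-1)(n-2)$ interior vertices have degree three. Summing degrees yields $|V(G)|=mn$ and $|E(G)|=(3mn-2m-n)/2$. Because $\theta_>(v)=\lceil(d_G(v)+1)/2\rceil$ equals $1$ at the two degree-one corners and $2$ at every other vertex, we have $\theta_V=2mn-2$ and $\theta_{\max}=2$. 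Substituting into the inequality {\rm min-seed}$(G,\theta_>)\geq(\theta_V-|E(G)|)/\theta_{\max}$ from Lemma~\ref{lowerbound-for-target-set} gives {\rm min-seed}$(G,\theta_>)\geq(mn+2m+n-4)/4$, and since the left-hand side is an integer this upgrades to {\rm min-seed}$(G,\theta_>)\geq\lceil(mn+2m+n)/4\rceil-1$.

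For the matching upper bound I would construct a target set $S\subseteq V(G)$ of size exactly $\lceil(mn+2m+n)/4\rceil-1$ together with a convinced sequence showing that $S$ influences $V(G)\setminus S$. Writing $n=4t+r$ with $r\in\{0,2\}$ and splitting on the parity of $m$, the construction breaks into four cases, mirroring the structure of the proof of Theorem~\ref{main-min-seed-GHT(m,n,d)}. In each case the bulk of $S$ is a checkerboard-type set, essentially the even-even lattice points $(2i,2j)$ within an appropriate range, augmented by one-row corrections along row $n-1$ and, depending on parity, along the right-most column, in order to repair the open boundary. The convinced sequence processes successive rows in order, exactly as the lists $\alpha_1,\alpha_2,\ldots$ do in Cases~1--3 of Theorem~\ref{main-min-seed-GHT(m,n,d)}; at each step the vertex being blackened has its required two black neighbors, or just one at a degree-one corner.

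The chief difficulty lies in the upper bound construction along the open boundary of the planar grid, since, unlike the torus, there are no wraparound edges to feed influence into the first column or the last row. The crucial savings that allow us to match the lower bound exactly come from the two degree-one corners $(m-1,0)$ and $(0,n-1)$: each has threshold $1$, so it blackens automatically as soon as its unique neighbor does, in effect saving one seed compared with what a degree-two corner would require. Arranging the seeds so that these savings are realized in each of the four parity cases, and writing down the explicit convinced sequence that certifies the propagation, is the main technical effort.
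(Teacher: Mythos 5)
Your overall template---the lower bound from Lemma~\ref{lowerbound-for-target-set} plus an explicit seed set certified by a convinced sequence---is exactly the paper's, but there are two genuine problems. First, you have misread the definition of ${\rm PHG}(m,n)$: the clause ``no vertex of degree one is generated'' means the two would-be corners $(m-1,0)$ and $(0,n-1)$ are simply not present, so the graph has $mn-2$ vertices, ${3mn\over 2}-{n\over 2}-m-2$ edges, minimum degree $2$, and $\theta_>\equiv 2$ (which is why the paper can write {\rm min-seed}$(G,\theta_>)={\rm min\mbox{-}seed}(G,2)$ and apply the bound $(|V|\theta_{\min}-|E|)/\theta_{\min}$ directly). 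Your degree census, your values $\theta_V=2mn-2$ and $|E(G)|=(3mn-2m-n)/2$, and above all the mechanism you single out as ``the crucial savings''---two threshold-$1$ corners that blacken for free---describe a different graph. The final lower-bound number happens to coincide, but the intuition guiding your upper-bound construction rests on vertices that do not exist in $G$.

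Second, the upper bound is only sketched, and that is the harder half of the theorem. Everything beyond the routine lower bound consists of exhibiting, in each parity case, a seed set of size exactly $\left\lceil(mn+2m+n)/4\right\rceil-1$ together with a convinced sequence showing it reaches the whole open boundary; the paper spends three cases ($r=2$; $r=0$ with $m$ even; $r=0$ with $m$ odd) writing these down and checking the cardinality arithmetic (e.g.\ $(2t+1)+(m-1)t+(m-1)$ in the first case). You describe the construction only qualitatively (``checkerboard-type set \ldots augmented by one-row corrections'') and explicitly defer ``the main technical effort.'' As written, nothing verifies that your set has the claimed size or that propagation actually penetrates the first column and last row without wraparound edges, so the matching upper bound is missing.
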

  \begin{proof}
  Let $G={\rm PHG}(m,n)$, $\theta_{\min}=\min\{\theta_>(v):v\in
  V(G)\}$.
  Obviously, {\rm min-seed}$(G,\theta_>)={\mbox{ min-seed}}(G,2)$.
  Since $G$ has $mn-2$ vertices and
  ${3mn\over 2}-{n\over 2}-m-2$ edges
  (see Lemma 3.1 of \cite{Adams-Hexagonal-Grids2011}), by
  the result min-seed$(G,\theta_>)\geq
  {|V(G)|\theta_{\min}-|E(G)|\over \theta_{\min}}$
  presented in Lemma \ref{lowerbound-for-target-set}, we see at once that
  {\rm min-seed}$(G,2)\geq \left\lceil{(mn+2m+n)/ 4}\right\rceil -1$.

  Next we will show that
  {\rm min-seed}$(G,2)\leq \left\lceil{(mn+2m+n)/ 4}\right\rceil -1$
  by giving a target set $S$ for $(G,2)$ which influences all
  vertices of $V(G)\setminus S$ and has
  $|S|=\left\lceil{(mn+2m+n)/ 4}\right\rceil
  -1$.
  Note that $n$ is even and $n\geq 4$.
  Let $n=4t+r$, where $t$ is a positive integer and $r\in \{0,2\}$.
  The proof is divided into three cases, according to the value of
  $r$ and the parity of $m$.

  {\bf Case 1.} $r=2$. In this case, consider
   $S=\{(0,2i)|0\leq i\leq 2t\} \cup \{(j,4+4k)|1\leq j\leq m-1,0\leq k\leq t-1\}
   \cup \{(1,0),(2,0),(3,0), \ldots, (m-2,0)\} \cup \{(m-1,1)\}$
   as a target set for $(G,2)$ (see Figure 10 for a graphical illustration of $S$).
   It is straightforward to check that $S$ can influence all vertices of
   $V(G)\setminus S$ in $(G,2)$ by using the convinced sequence
   $\alpha=\alpha_1 \sqcup \alpha_2 \sqcup \alpha_3 \sqcup \alpha_4$
   (see Figure 5 in Appendix for a graphical illustration of this
   convinced sequence $\alpha$), where\\
   $\alpha_1=[(0,1),(1,1),(2,1),\ldots,(m-2,1)]$,\\
   $\alpha_2=\sqcup_{k=0}^{t-2}
   [(0,5+4k),(1,5+4k),(2,5+4k),\ldots,(m-1,5+4k)]$,\\
   $\alpha_3=\sqcup_{k=0}^{t-1} ([(0,3+4k),(1,3+4k),(1,2+4k)] \sqcup
   (\sqcup_{j=2}^{m-1} [(j,3+4k),(j,2+4k)]))$, and\\
   $\alpha_4=[(1,n-1),(2,n-1),(3,n-1),\ldots,(m-1,n-1)]$.\\
   Since
   $|S|=(2t+1)+(m-1)t+(m-1)=\lceil{(mn+2m+n)/4}\rceil-1
   $, we see that
   {\rm min-seed}$(G,2)\leq \lceil{(mn+2m+n)/4}\rceil-1$.

  {\bf Case 2.} $r=0$ and $m$ is even. Consider
   $S=\{(0,2i)|0\leq i\leq 2t-1\} \cup \{(j,2+4k)|1\leq j\leq m-1,0\leq k\leq t-1\}
   \cup \{(2,0),(4,0),(6,0), \ldots, (m-2,0)\}$
   as a target set for $(G,2)$ (see Figure 11 for a graphical illustration of $S$).
   It is straightforward to check that $S$ can influence all vertices of
   $V(G)\setminus S$ in $(G,2)$ by using the convinced sequence
   $\alpha=\alpha_1 \sqcup \alpha_2 \sqcup \alpha_3 \sqcup \alpha_4 \sqcup \alpha_5$
   (see Figure 6 in Appendix for a graphical illustration of this
   convinced sequence $\alpha$), where\\
   $\alpha_1=[(0,1),(1,1),(2,1),\ldots,(m-1,1)]$,\\
   $\alpha_2=[(1,0),(3,0),(5,0),\ldots,(m-3,0)]$,\\
   $\alpha_3=\sqcup_{k=0}^{t-2}
   [(0,3+4k),(1,3+4k),(2,3+4k),\ldots,(m-1,3+4k)]$,\\
   $\alpha_4=\sqcup_{k=0}^{t-2} ([(0,5+4k),(1,5+4k),(1,4+4k)] \sqcup
   (\sqcup_{j=2}^{m-1} [(j,5+4k),(j,4+4k)]))$, and\\
   $\alpha_5=[(1,n-1),(2,n-1),(3,n-1),\ldots,(m-1,n-1)]$.\\
   Since
   $|S|=2t+(m-1)t+(\frac{m}{2}-1)=\lceil{(mn+2m+n)/4}\rceil-1
   $, we see that
   {\rm min-seed}$(G,2)\leq \lceil{(mn+2m+n)/4}\rceil-1$.

  {\bf Case 3.} $r=0$ and $m$ is odd. Consider
   $S=\{(0,2i)|0\leq i\leq 2t-1\} \cup \{(j,2+4k)|1\leq j\leq m-1,0\leq k\leq t-1\}
   \cup \{(2,0),(4,0),(6,0), \ldots, (m-3,0)\} \cup \{(m-2,0)\}$
   as a target set for $(G,2)$ (see Figure 12 for a graphical illustration of $S$).
   It is straightforward to check that $S$ can influence all vertices of
   $V(G)\setminus S$ by using the convinced sequence
   $\alpha=\alpha_1 \sqcup \alpha_2 \sqcup \alpha_3 \sqcup \alpha_4 \sqcup \alpha_5$
   (see Figure 7 in Appendix for a graphical illustration of this
   convinced sequence $\alpha$), where\\
   $\alpha_1=[(0,1),(1,1),(2,1),\ldots,(m-1,1)]$,\\
   $\alpha_2=[(1,0),(3,0),(5,0),\ldots,(m-4,0)]$,\\
   $\alpha_3=\sqcup_{k=0}^{t-2}
   [(0,3+4k),(1,3+4k),(2,3+4k),\ldots,(m-1,3+4k)]$,\\
   $\alpha_4=\sqcup_{k=0}^{t-2} ([(0,5+4k),(1,5+4k),(1,4+4k)] \sqcup
   (\sqcup_{j=2}^{m-1} [(j,5+4k),(j,4+4k)]))$, and\\
   $\alpha_5=[(1,n-1),(2,n-1),(3,n-1),\ldots,(m-1,n-1)]$.\\
   Since
   $|S|=2t+(m-1)t+(\frac{m-1}{2})=\lceil{(mn+2m+n)/4}\rceil-1
   $, we see that
   {\rm min-seed}$(G,2)\leq \lceil{(mn+2m+n)/4}\rceil-1$.
  \end{proof}

 \includegraphics[scale=0.6]{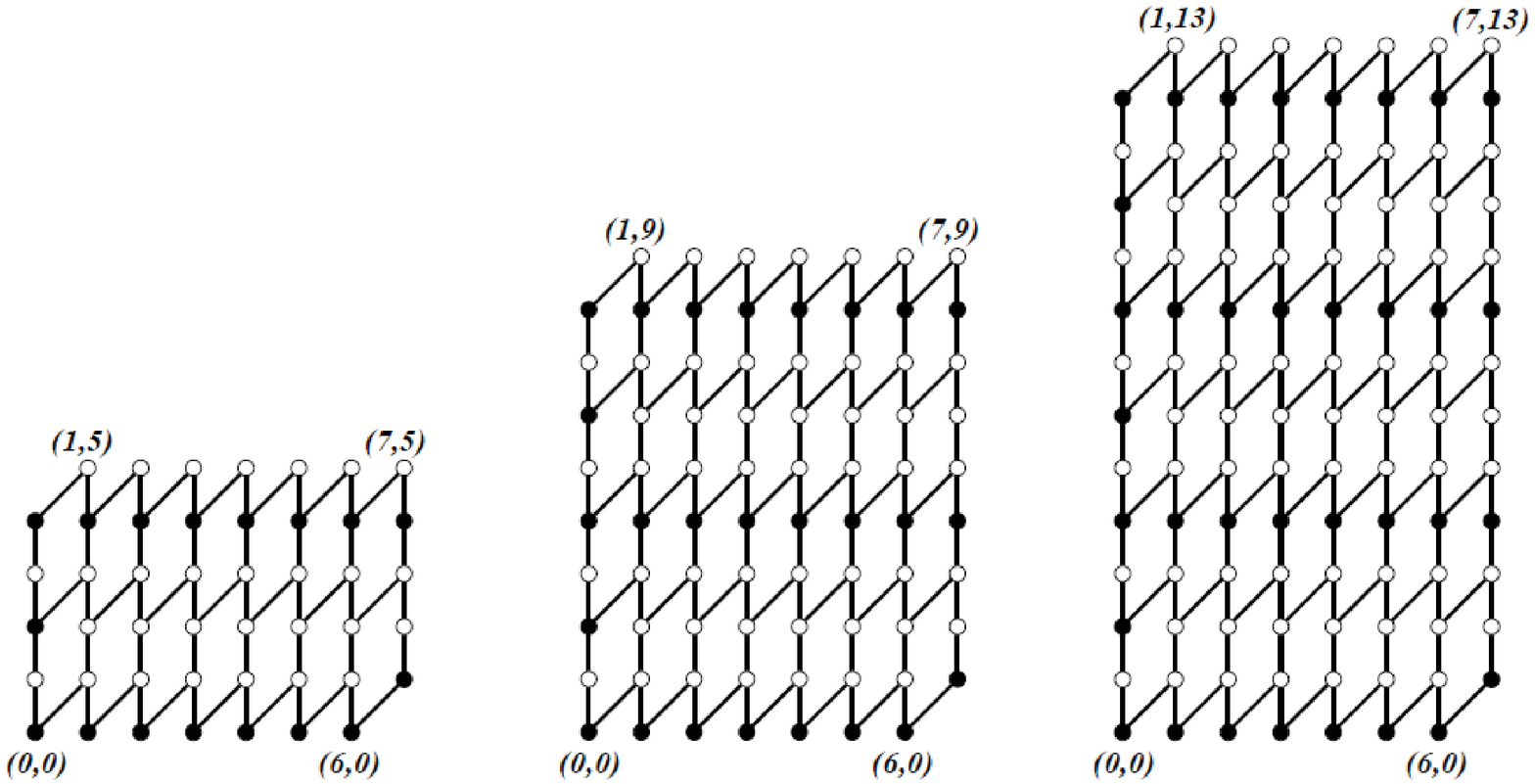}
 \begin{description}
    \item[Figure 10.] ${\rm PHG}(8,6)$ (left), ${\rm PHG}(8,10)$
    (middle), and ${\rm PHG}(8,14)$ (right)
    where the target set $S$ is the set of all black vertices.
 \end{description}

 \includegraphics[scale=0.6]{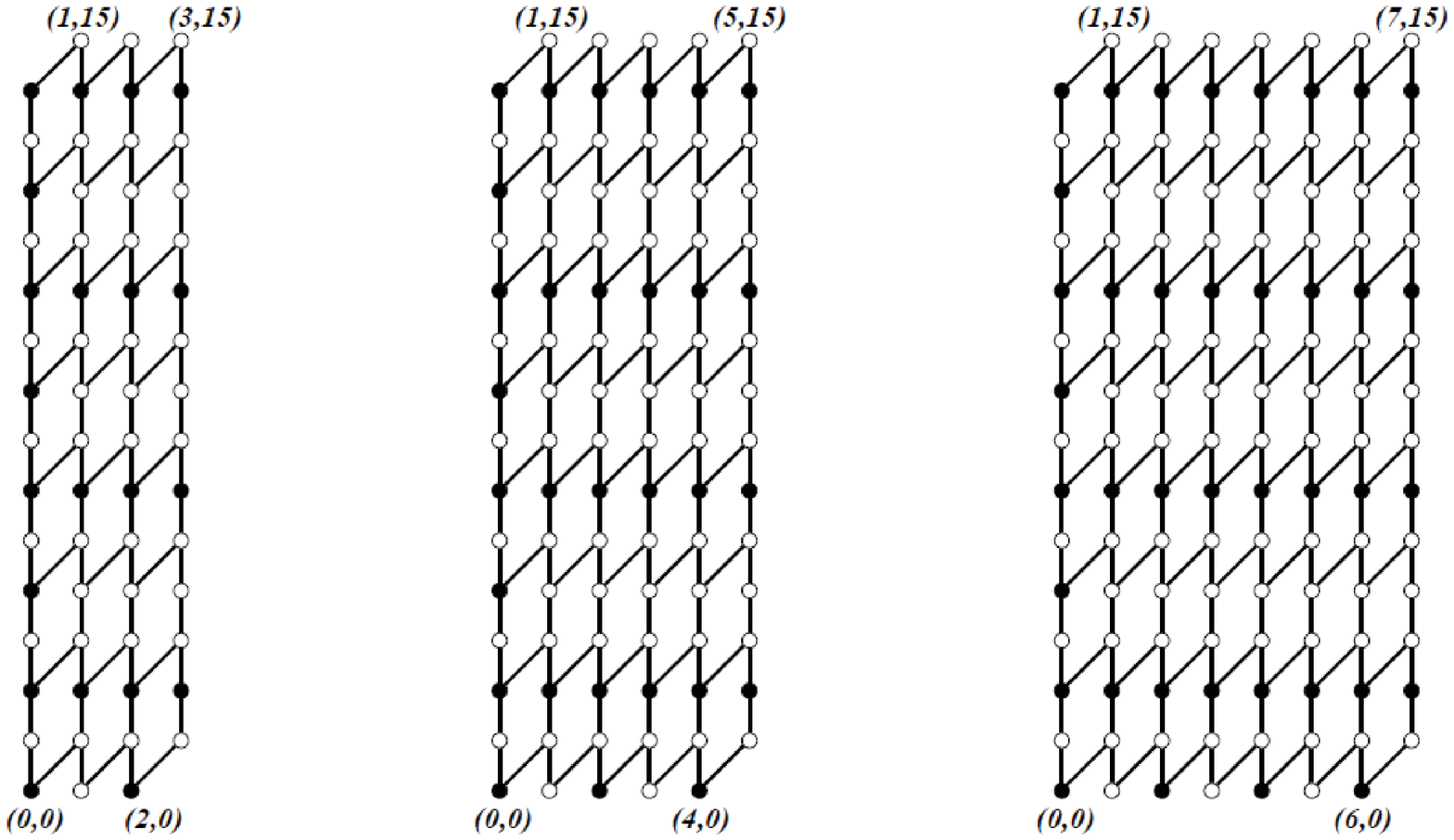}
 \begin{description}
    \item[Figure 11.] ${\rm PHG}(4,16)$ (left), ${\rm PHG}(6,16)$
    (middle), and ${\rm PHG}(8,16)$ (right)
    where the target set $S$ is the set of all black vertices.
 \end{description}

 \includegraphics[scale=0.6]{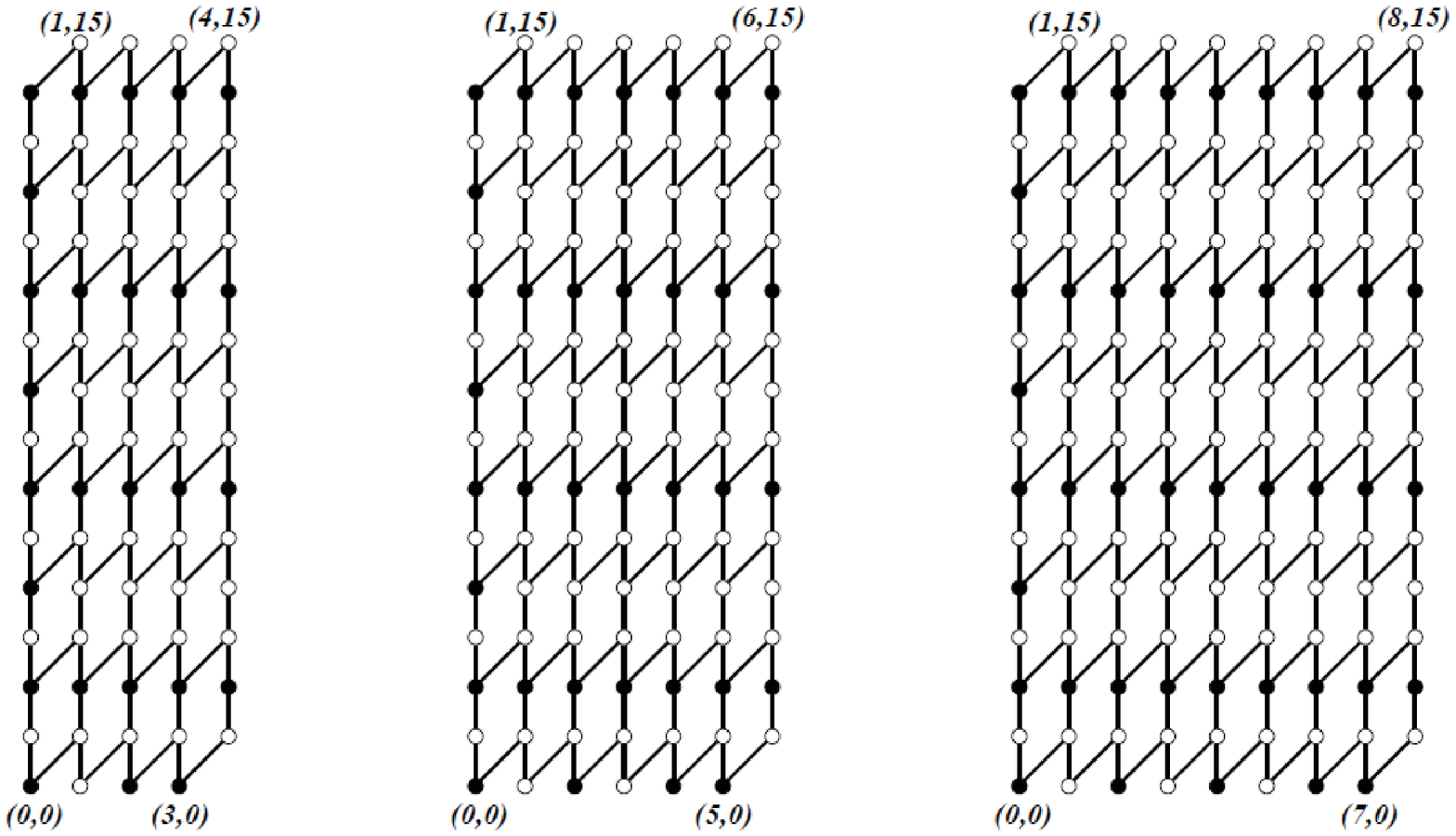}
 \begin{description}
    \item[Figure 12.] ${\rm PHG}(5,16)$ (left), ${\rm PHG}(7,16)$
    (middle), and ${\rm PHG}(9,16)$ (right)
    where the target set $S$ is the set of all black vertices.
 \end{description}

 In Theorems 4.1 and 4.2 of \cite{Adams-Hexagonal-Grids2011},
 Adams et al.~showed that if $G$ is an $m$ by $n$ cylindrical hexagonal
 grid, then
 {\rm min-seed}$(G,\theta_{\geq}) \in
 \{\lceil{(n-2)m+2\over 4}\rceil,\lceil{(n-2)m+2\over 4}\rceil+1\}$.
 Below we consider an $m$ by $n$ cylindrical hexagonal
 grid equipped with a strict majority threshold $\theta_>$ and
 determine its optimal target set.

  %
  %
  %
  \begin{theorem}
  \label{main-min-seed-CHG(m,n)}
  If $G$ is an $m$ by $n$ cylindrical hexagonal
  grid, then
  {\rm min-seed}$(G,\theta_>)=
  \left\lceil{(mn+2m)/ 4}\right\rceil$.
  \end{theorem}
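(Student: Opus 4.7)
The proof will follow the two-step template of Theorems \ref{main-min-seed-GHT(m,n,d)} and \ref{main-min-seed-PHG(m,n)}: first I would establish the lower bound via Lemma \ref{lowerbound-for-target-set}, then match it by exhibiting an explicit target set together with a convinced sequence. Since every vertex of $G = {\rm CHG}(m,n)$ has degree $2$ or $3$, the strict majority threshold is identically $2$, so {\rm min-seed}$(G,\theta_>) = {\rm min-seed}(G,2)$. The wrap-around in the first coordinate ensures that no vertex of degree one is generated, so every one of the $mn$ coordinate pairs corresponds to a vertex of $G$. Vertices in the boundary rows $y = 0$ and $y = n-1$ each have degree $2$, while all other vertices have degree $3$, giving $|V(G)| = mn$ and $|E(G)| = (3mn-2m)/2$. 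Substituting these values (with $\theta_{\min} = 2$) into the bound
\[
\mbox{{\rm min-seed}}(G,\theta_>) \;\geq\; \frac{|V(G)|\theta_{\min} - |E(G)|}{\theta_{\min}}
\]
of Lemma \ref{lowerbound-for-target-set} yields exactly $(mn+2m)/4$, whence {\rm min-seed}$(G,2) \geq \lceil (mn+2m)/4\rceil$.

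For the upper bound, I would construct an explicit target set $S$ of size $\lceil (mn+2m)/4\rceil$ together with a convinced sequence witnessing that $S$ influences all of $V(G)\setminus S$ in $(G,2)$. Writing $n = 4t + r$ with $r \in \{0,2\}$, the construction would be split by $r$ and by the parity of $m$, mirroring the planar case. The key new ingredient is the wrap edge $(m-1,y)\,{-}\,(0,y+1)$ for $y$ even: together with the usual edges, rows $y$ and $y+1$ form a single $2m$-cycle, so once two adjacent vertices of this cycle are black, the rest of the cycle can be activated in order. This allows us to drop the leftmost-column seeds of the form $(0,2i)$ used in the planar proof (which were needed to initiate horizontal propagation inside each row) and replace them with a sparser pattern spaced every four rows, saving roughly one seed per pair of rows compared with the planar construction.

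The convinced sequence would then propagate upward block-by-block: the bottom row $y=0$ and row $1$ are completed first using the bottom-row seeds, then rows $2$ and $3$ are activated together (exploiting the $2m$-cycle structure and the seeds placed in row $2$), and so on, ending with the top row $y=n-1$ being filled from row $n-2$. At each step the check is local and mirrors the arguments in the proof of Theorem \ref{main-min-seed-PHG(m,n)}. The main obstacle is to arrange the seeds so that simultaneously (i) the total count equals $\lceil (mn+2m)/4 \rceil$ in each of the parity/residue cases, and (ii) the convinced sequence is valid at every step, i.e., every newly activated vertex has at least two already-black neighbors at the moment of its activation. The cleanest presentation is three separate cases, each supported by a figure analogous to Figures 10--12, followed by a routine verification of the sequence and a direct arithmetic check of $|S|$.
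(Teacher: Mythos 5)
Your lower bound is correct and is exactly the paper's: with $|V(G)|=mn$, $|E(G)|=\tfrac{3mn}{2}-m$ and $\theta_{\min}=2$, the bound $\frac{|V(G)|\theta_{\min}-|E(G)|}{\theta_{\min}}$ from Lemma \ref{lowerbound-for-target-set} gives $\frac{mn+2m}{4}$. The gap is in the upper bound, which is where essentially all of the work lies: you never actually exhibit the target set $S$, verify $|S|=\lceil(mn+2m)/4\rceil$ in each case, or check a convinced sequence --- indeed you explicitly defer this as ``the main obstacle.'' The paper's proof consists precisely of those three explicit constructions (for $r=2$; $r=0$ with $m$ even; $r=0$ with $m$ odd) together with explicit convinced sequences, so the proposal as written does not constitute a proof.

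Beyond the missing construction, the one structural idea you do offer is wrong as stated. You claim that since rows $y$ and $y+1$ form a $2m$-cycle, ``once two adjacent vertices of this cycle are black, the rest of the cycle can be activated in order.'' With threshold $2$ this fails: on a cycle of length at least $4$, the next vertex along has only one black neighbour on the cycle, so it can only be activated if its third neighbour (in row $y-1$ or $y+2$) is already black. Propagation around such a cycle must therefore be interleaved with the adjacent rows, which is exactly how the paper's convinced sequences $\alpha_1,\ldots$ are organized. Relatedly, your proposed economy --- thinning the column-$0$ seeds $(0,2i)$ to one every four rows --- is not what the paper does (it keeps $(0,2i)$ at every even row, shrinks the interior pattern from columns $1,\ldots,m-1$ to $1,\ldots,m-2$, and adds one or two seeds such as $(m-1,n-2)$); your variant might be salvageable, but since column $0$ carries the seeds that initiate horizontal propagation in each block of rows, removing half of them is precisely the kind of change that can break the convinced sequence, and you have given no verification that it does not.
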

  \begin{proof}
  Let $G={\rm CHG}(m,n)$, $\theta_{\min}=\min\{\theta_>(v):v\in
  V(G)\}$.
  Obviously, {\rm min-seed}$(G,\theta_>)={\mbox{ min-seed}}(G,2)$.
  Since $G$ has $mn$ vertices and ${3mn\over 2}-m$ edges
  (see Lemma 4.1 of \cite{Adams-Hexagonal-Grids2011}), by
  the result min-seed$(G,\theta_>)\geq
  {|V(G)|\theta_{\min}-|E(G)|\over \theta_{\min}}$
  presented in Lemma \ref{lowerbound-for-target-set}, we see at once that
  {\rm min-seed}$(G,2)\geq \left\lceil{(mn+2m)/ 4}\right\rceil$.

  Next we will show that
  {\rm min-seed}$(G,2)\leq \left\lceil{(mn+2m)/ 4}\right\rceil$
  by giving a target set $S$ for $(G,2)$ which influences all
  vertices of $V(G)\setminus S$ and has
  $|S|=\left\lceil{(mn+2m)/ 4}\right\rceil$.
  Notice that $n$ is even. Let
  $n=4t+r$ where $t$ is a positive integer and $r\in \{0,2\}$.
  The proof is divided into three cases, according to the value of
  $r$ and the parity of $m$.

  {\bf Case 1.} $r=2$. Consider
   $S=\{(0,2i)|0\leq i\leq 2t\} \cup \{(j,4k)|1\leq j\leq m-2,0\leq k\leq t\}
   \cup \{(m-1,n-2)\}$
   as a target set for $(G,2)$ (see Figure 13 for a graphical illustration of $S$).
   It is straightforward to check that $S$ can influence all vertices of
   $V(G)\setminus S$ by using the convinced sequence
   $\alpha=\alpha_1 \sqcup \alpha_2 \sqcup \alpha_3 \sqcup \alpha_4$
   (see Figure 8 in Appendix for a graphical illustration of this
   convinced sequence $\alpha$), where\\
   $\alpha_1=\sqcup_{k=0}^{t-1}
   [(0,1+4k),(1,1+4k),(2,1+4k),\ldots,(m-2,1+4k)]$,\\
   $\alpha_2=\sqcup_{k=0}^{t-1} ([(0,3+4k),(1,3+4k),(1,2+4k)] \sqcup
   (\sqcup_{j=2}^{m-2} [(j,3+4k),(j,2+4k)]))$,\\
   $\alpha_3=[(0,n-1),(1,n-1),(2,n-1),\ldots,(m-1,n-1)]$, and\\
   $\alpha_4=[(m-1,n-3),(m-1,n-4),(m-1,n-5),\ldots,(m-1,0)]$.\\
   Since
   $|S|=(2t+1)+(m-2)(t+1)+1=\lceil{(mn+2m)/4}\rceil$, we see that
   {\rm min-seed}$(G,2)\leq \lceil{(mn+2m)/4}\rceil$.

  {\bf Case 2.} $r=0$ and $m$ is even. Consider
   $S=\{(0,2i)|0\leq i\leq 2t-1\} \cup \{(j,2+4k)|1\leq j\leq m-2,0\leq k\leq t-1\}
   \cup \{(2,0),(4,0),(6,0),\ldots,(m-2,0)\} \cup \{(m-1,n-2)\}$
   as a target set for $(G,2)$ (see Figure 14 for a graphical illustration of $S$).
   It is straightforward to check that $S$ can influence all vertices of
   $V(G)\setminus S$ by using the convinced sequence
   $\alpha=\alpha_1 \sqcup \alpha_2 \sqcup \alpha_3 \sqcup \alpha_4
           \sqcup \beta_1 \sqcup \beta_2$
   (see Figure 9 in Appendix for a graphical illustration of this
   convinced sequence $\alpha$), where\\
   $\alpha_1=\sqcup_{k=0}^{t-2}
   [(0,3+4k),(1,3+4k),(2,3+4k),\ldots,(m-2,3+4k)]$,\\
   $\alpha_2=\sqcup_{k=0}^{t-2} ([(0,5+4k),(1,5+4k),(1,4+4k)] \sqcup
   (\sqcup_{j=2}^{m-2} [(j,5+4k),(j,4+4k)]))$,\\
   $\alpha_3=[(0,n-1),(1,n-1),(2,n-1),\ldots,(m-1,n-1)]$,\\
   $\alpha_4=[(m-1,n-3),(m-1,n-4),(m-1,n-5),\ldots,(m-1,2)]$,\\
   $\beta_1=[(0,1),(1,1),(2,1),\ldots,(m-1,1)]$, and\\
   $\beta_2=[(1,0),(3,0),(5,0),\ldots,(m-1,0)]$.\\
   Since
   $|S|=2t+(m-2)t+{m-2\over 2}+1=\lceil{(mn+2m)/4}\rceil$, we see that
   {\rm min-seed}$(G,2)\leq \lceil{(mn+2m)/4}\rceil$.

  {\bf Case 3.} $r=0$ and $m$ is odd. Consider
   $S=\{(0,2i)|0\leq i\leq 2t-1\} \cup \{(j,2+4k)|1\leq j\leq m-2,0\leq k\leq t-1\}
   \cup \{(2,0),(4,0),(6,0),\ldots,(m-3,0)\} \cup \{(m-1,1),(m-1,n-2)\}$
   as a target set for $(G,2)$ (see Figure 15 for a graphical illustration of $S$).
   It is straightforward to check that $S$ can influence all vertices of
   $V(G)\setminus S$ by using the convinced sequence
   $\alpha=\alpha_1 \sqcup \alpha_2 \sqcup \alpha_3 \sqcup \alpha_4
           \sqcup \beta_1 \sqcup \beta_2 \sqcup \beta_3$
   (see Figure 10 in Appendix for a graphical illustration of this
   convinced sequence $\alpha$), where\\
   $\alpha_1=\sqcup_{k=0}^{t-2}
   [(0,3+4k),(1,3+4k),(2,3+4k),\ldots,(m-2,3+4k)]$,\\
   $\alpha_2=\sqcup_{k=0}^{t-2} ([(0,5+4k),(1,5+4k),(1,4+4k)] \sqcup
   (\sqcup_{j=2}^{m-2} [(j,5+4k),(j,4+4k)]))$,\\
   $\alpha_3=[(0,n-1),(1,n-1),(2,n-1),\ldots,(m-1,n-1)]$,\\
   $\alpha_4=[(m-1,n-3),(m-1,n-4),(m-1,n-5),\ldots,(m-1,2)]$,\\
   $\beta_1=[(0,1),(1,1),(2,1),\ldots,(m-2,1)]$,\\
   $\beta_2=[(1,0),(3,0),(5,0),\ldots,(m-4,0)]$, and\\
   $\beta_3=[(m-2,0),(m-1,0)]$.\\
   Since
   $|S|=2t+(m-2)t+{m-3\over 2}+2=\lceil{(mn+2m)/4}\rceil$, we see that
   {\rm min-seed}$(G,2)\leq \lceil{(mn+2m)/4}\rceil$.
  \end{proof}

 \includegraphics[scale=0.6]{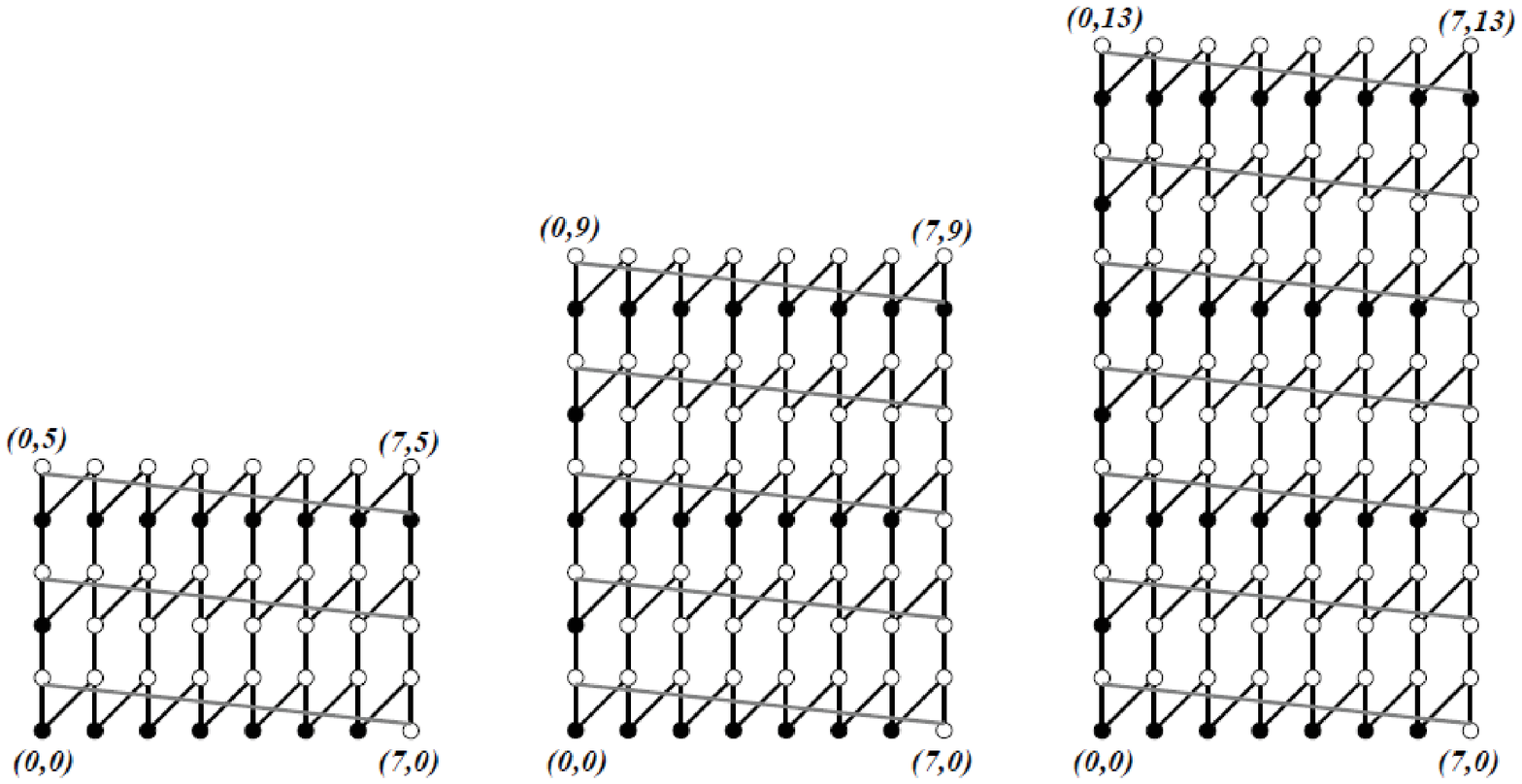}
 \begin{description}
    \item[Figure 13.] ${\rm CHG}(8,6)$ (left), ${\rm CHG}(8,10)$
    (middle), and ${\rm CHG}(8,14)$ (right)
    where the target set $S$ is the set of all black vertices.
 \end{description}

 \includegraphics[scale=0.6]{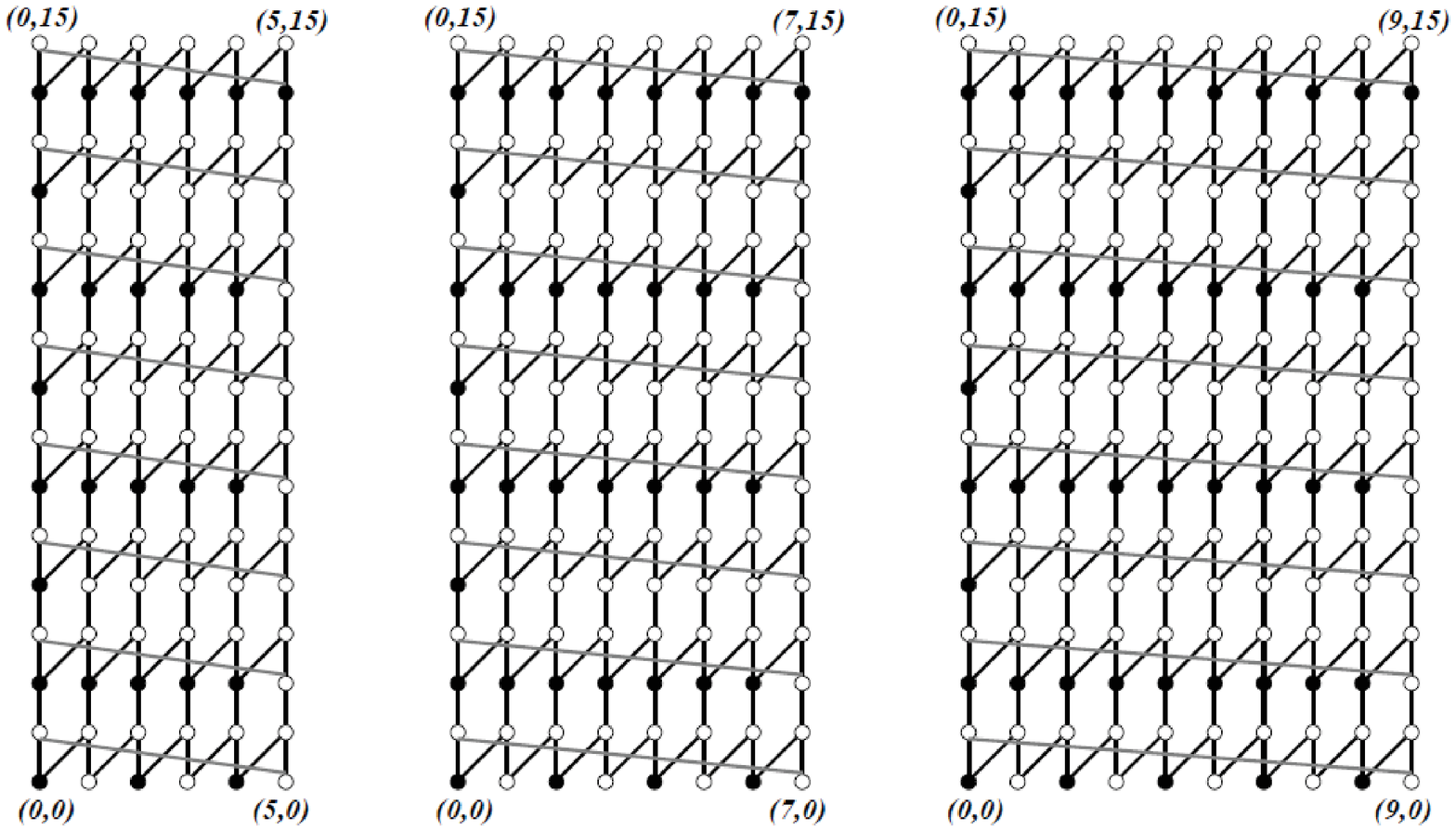}
 \begin{description}
    \item[Figure 14.] ${\rm CHG}(6,16)$ (left), ${\rm CHG}(8,16)$
    (middle), and ${\rm CHG}(10,16)$ (right)
    where the target set $S$ is the set of all black vertices.
 \end{description}

 \includegraphics[scale=0.6]{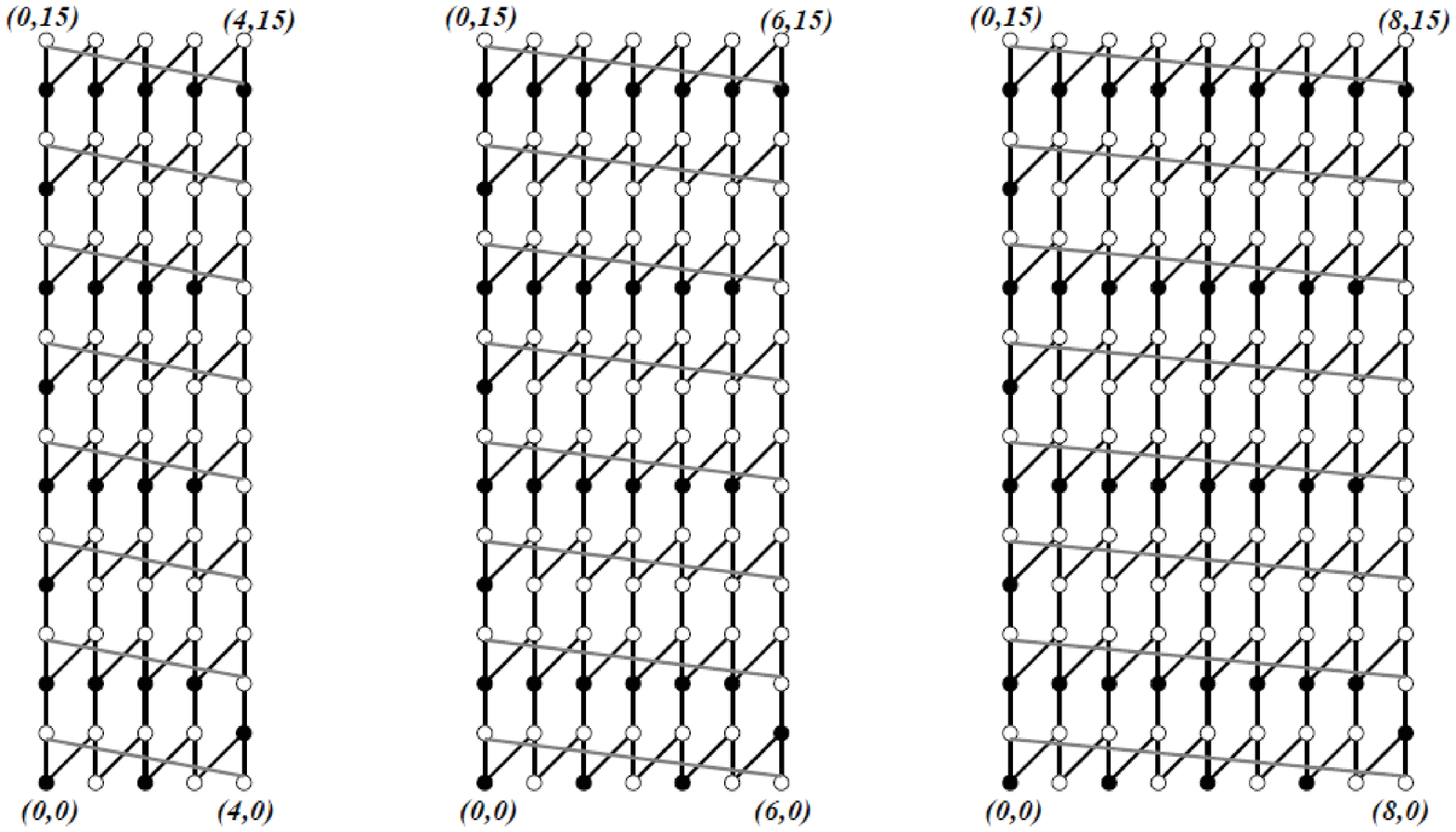}
 \begin{description}
    \item[Figure 15.] ${\rm CHG}(5,16)$ (left), ${\rm CHG}(7,16)$
    (middle), and ${\rm CHG}(9,16)$ (right)
    where the target set $S$ is the set of all black vertices.
 \end{description}

 In Theorems 5.1 and 5.2 of \cite{Adams-Hexagonal-Grids2011},
 Adams et al.~showed that if $G$ is an $m$ by $n$ toroidal hexagonal
 grid then
 {\rm min-seed}$(G,\theta_{\geq}) \in
 \{\lceil{mn+2\over 4}\rceil,\lceil{mn+2\over 4}\rceil+1\}$.
 Below we consider an $m$ by $n$ toroidal hexagonal
 grid equipped with a strict majority threshold $\theta_>$ and
 determine its optimal target set.
 Since ${\rm THG}(m,n)$ is $3$-regular,
 it can be seen that if $G$ is a toroidal hexagonal
 grid then {\rm min-seed}$(G,\theta_{\geq})=\mbox{
 min-seed}(G,\theta_{>})$.
 Thus
 our result in Theorem \ref{main-min-seed-THG(m,n)}
 closes the gap in the corresponding result proved by
 Adams et al.~(see Table 2).

  %
  %
  %
  \begin{theorem}
  \label{main-min-seed-THG(m,n)}
  If $G$ is an $m$ by $n$ toroidal hexagonal
  grid, then
  {\rm min-seed}$(G,\theta_>)=\mbox{{\rm
  min-seed}$(G,\theta_\geq)$}=\nabla(G)=
  \lceil{(mn+2)/4}\rceil$.
  \end{theorem}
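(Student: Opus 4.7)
The plan is to reduce the theorem to computing $\text{min-seed}(G,2)$ and then to bound this quantity both from below and above. First, since $G=\text{THG}(m,n)$ is $3$-regular, every vertex $v$ satisfies $\theta_>(v)=\theta_\geq(v)=\lceil(3+1)/2\rceil=2$, so $\text{min-seed}(G,\theta_>)=\text{min-seed}(G,\theta_\geq)=\text{min-seed}(G,2)$. By Proposition~1 of \cite{Dreyer+Roberts} (cited in the paragraph following Table~2), for a $(k+1)$-regular graph a target set under threshold $k$ is exactly a feedback vertex set; applying this with $k=2$ gives $\text{min-seed}(G,2)=\nabla(G)$. Hence it suffices to show $\text{min-seed}(G,2)=\lceil(mn+2)/4\rceil$.

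The lower bound $\text{min-seed}(G,2)\geq \lceil(mn+2)/4\rceil$ follows from Lemma~\ref{lowerbound-for-target-set}. Indeed, $|V(G)|=mn$, $|E(G)|=3mn/2$, $\Delta=3$ and $\delta=\max_v(d_G(v)-\theta_>(v))=1$, so the quantity $(|E(G)|-(|V(G)|-1)\delta)/(\Delta-\delta)$ equals $(3mn/2-(mn-1))/2=(mn+2)/4$, and rounding up gives the claim. This is the same lower-bound calculation used in Theorem~\ref{main-min-seed-GHT(m,n,d)}, since $\text{THG}(m,n)$ and $\text{GHT}(m,n,d)$ have identical vertex and edge counts and are both $3$-regular.

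For the matching upper bound I would exhibit, case by case in the residue of $n$ modulo $4$ and the parity of $m$, an explicit target set $S$ of size $\lceil(mn+2)/4\rceil$ together with a convinced sequence $\alpha=\alpha_1\sqcup\alpha_2\sqcup\cdots$ witnessing that $S$ activates $V(G)\setminus S$ under $(G,2)$. The building block is the pattern used in Cases~1--3 of the proofs of Theorems~\ref{main-min-seed-GHT(m,n,d)} and~\ref{main-min-seed-CHG(m,n)}: seed every other vertex in column $x=0$ together with every fourth row placed at heights $y\equiv 2\pmod 4$ (with a small number of corrective seeds), then process the grid row by row, activating each new row from the two black neighbors supplied by the previous row and by the column of seeds on the left. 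Writing $n=4t+r$ with $r\in\{0,2\}$, each case should then yield $|S|={mn\over 4}+O(1)$ matching $\lceil(mn+2)/4\rceil$ on the nose.

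The main obstacle is the double wraparound of $\text{THG}(m,n)$: unlike $\text{CHG}(m,n)$, where the rightmost column can be used as a sink for the activation sweep, here every row must simultaneously close up horizontally (mod $m$) and every column must close up vertically (mod $n$). The delicate step will be the final rows and the boundary column $x=m-1$, where one must verify that the seeds already present in row $0$ (after wraparound) supply the second black neighbor needed by the last vertices to turn black. I expect the worst case to be $m$ odd with $r=2$, where parity forces the introduction of one or two extra boundary seeds (analogous to the $\{(m-1,1),(m-1,n-2)\}$ corrections in Case~3 of Theorem~\ref{main-min-seed-CHG(m,n)}); the count then has to be checked carefully against the ceiling $\lceil(mn+2)/4\rceil$ to ensure that the rounding absorbs exactly those extra seeds.
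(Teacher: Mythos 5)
Your reduction to $\mbox{min-seed}(G,2)=\nabla(G)$ and your lower-bound computation via Lemma \ref{lowerbound-for-target-set} are correct and match what the paper does for the other torus families. But the upper bound --- the only substantive direction of the theorem --- is not actually proved: you describe a construction you \emph{would} carry out, and then you yourself flag the double wraparound of ${\rm THG}(m,n)$ as an unresolved obstacle. That obstacle is exactly where the work lies. The row-by-row sweep of Theorem \ref{main-min-seed-CHG(m,n)} uses the planar boundary in the second coordinate to anchor the final rows; in the toroidal case the last vertices to turn black must get their second black neighbor from seeds in row $0$ after wraparound, and whether this works depends on how the hexagonal offsets align modulo both $m$ and $n$. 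Until an explicit seed set of size $\lceil (mn+2)/4\rceil$ and a convinced sequence are written down and checked for each parity case, the upper bound is unverified, so the proposal has a genuine gap.

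The paper sidesteps all of this. Its proof of Theorem \ref{main-min-seed-THG(m,n)} exhibits the bijection $f(i,j)=(m-1-i,j-i)$ from ${\rm HRoT}(m,n)$ to ${\rm THG}(m,n)$, checks that $f$ preserves edges (and hence non-edges, by comparing edge counts), and concludes that ${\rm THG}(m,n)$ is isomorphic to ${\rm HRoT}(m,n)$, which is in turn isomorphic to ${\rm GHT}(m,n,m\bmod n)$. The explicit target sets and convinced sequences already constructed in Theorem \ref{main-min-seed-GHT(m,n,d)} then transfer verbatim via Corollary \ref{main-corollary}, and Proposition 1 of \cite{Dreyer+Roberts} gives the identification with $\nabla(G)$. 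To complete your direct route you would need to write out and verify the toroidal constructions in all cases; far more economical is to prove this single isomorphism and cite the corollary, as the paper does.
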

  \begin{proof}
  Let $G={\rm THG}(m,n)$.
  To prove {\rm min-seed}$(G,\theta_>)=
  \left\lceil{(mn+2)/4}\right\rceil$ we show that
  $G$ is isomorphic to
  the honeycomb rhombic torus ${\rm HRoT}(m,n)$.
  Let $f$ be a function from the vertex set of ${\rm HRoT}(m,n)$
  to the vertex set of $G$ such that
  $f(i,j)=(m-1-i,j-i)$.
  It is straightforward to check that
  $f$ is a bijection and preserves edges.
  Since both ${\rm HRoT}(m,n) $ and $G$
  have  ${3mn\over 2}$ edges, $f$ also preserves non-edges.
  Therefore $f$ is an isomorphism from ${\rm HRoT}(m,n)$ to $G$.
  It follows that, by Proposition 1 of \cite{Dreyer+Roberts} and
  Corollary \ref{main-corollary},
  {\rm min-seed}$(G,\theta_>)=\nabla(G)=
  \lceil{(mn+2)/4}\rceil$.
  \end{proof}

 %
 %

\newpage

\noindent {\bf Appendix}:
[Not for publication - for referees'
information only]

 \begin{center}
 \includegraphics[scale=0.61]{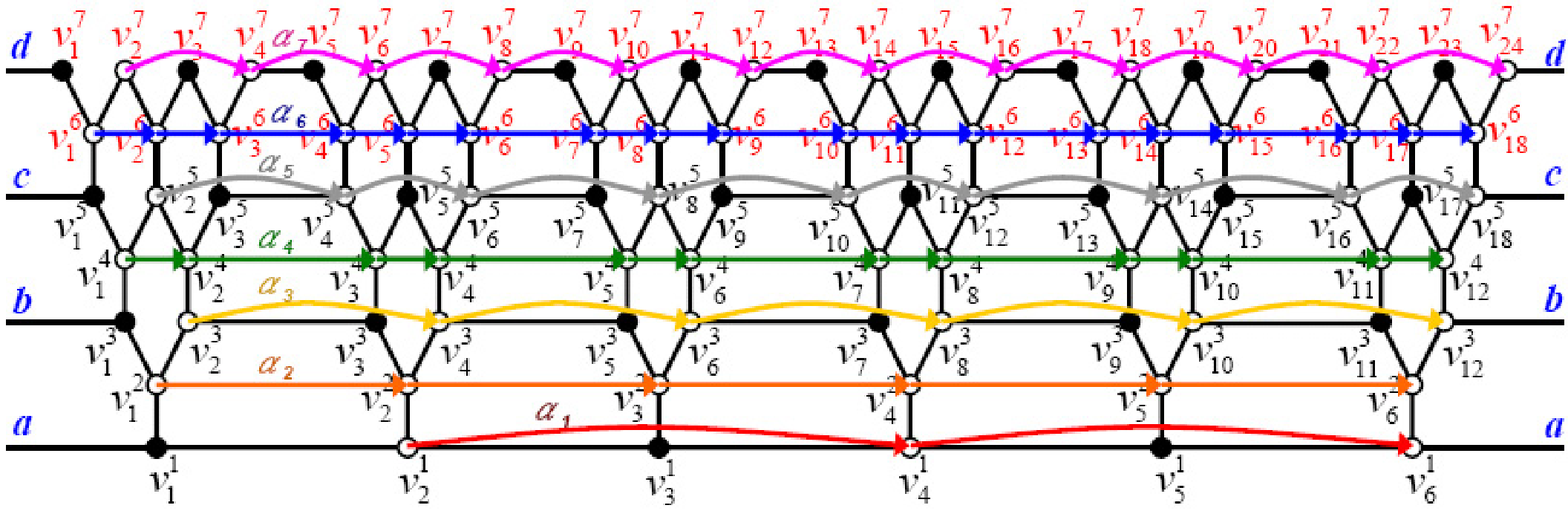}
 \begin{description}
    \item[Figure 1.]
    ${\rm HM}_4$ and its target set $S$. For $i=2,3,4$, convinced
    subsequences
    $\alpha_{2i-2}=[v_1^{2i-2},v_2^{2i-2},v_3^{2i-2},
        \ldots,v_{6(i-1)}^{2i-2}]$ and
        $\alpha_{2i-1}=[v_2^{2i-1},v_4^{2i-1},v_6^{2i-1},
        \ldots,v_{6i}^{2i-1}]$ are illustrated by colored
    directed paths.
    $\beta=\sqcup_{k=1}^{7}\alpha_k$.
 \end{description}
 \bigskip



 \includegraphics[scale=0.5]{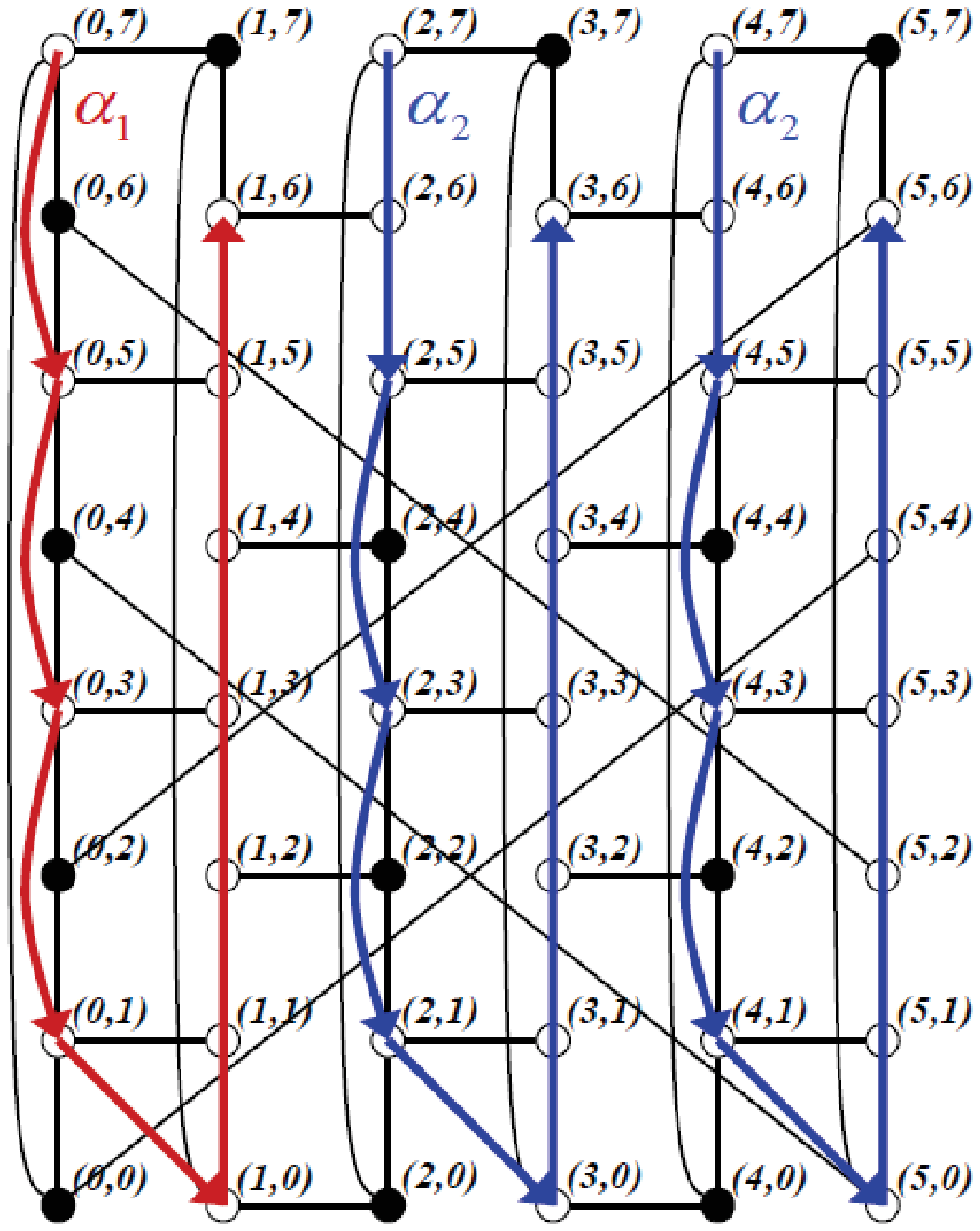}
 \begin{description}
    \item[Figure 2.]
    ${\rm GHT}(6,8,4)$ and its target sets $S$.
    Convinced subsequences $\alpha_1,\alpha_2$ are illustrated by colored
    directed paths and
    $\alpha=\alpha_1 \sqcup \alpha_2$.
 \end{description}
 \bigskip



 \includegraphics[scale=0.5]{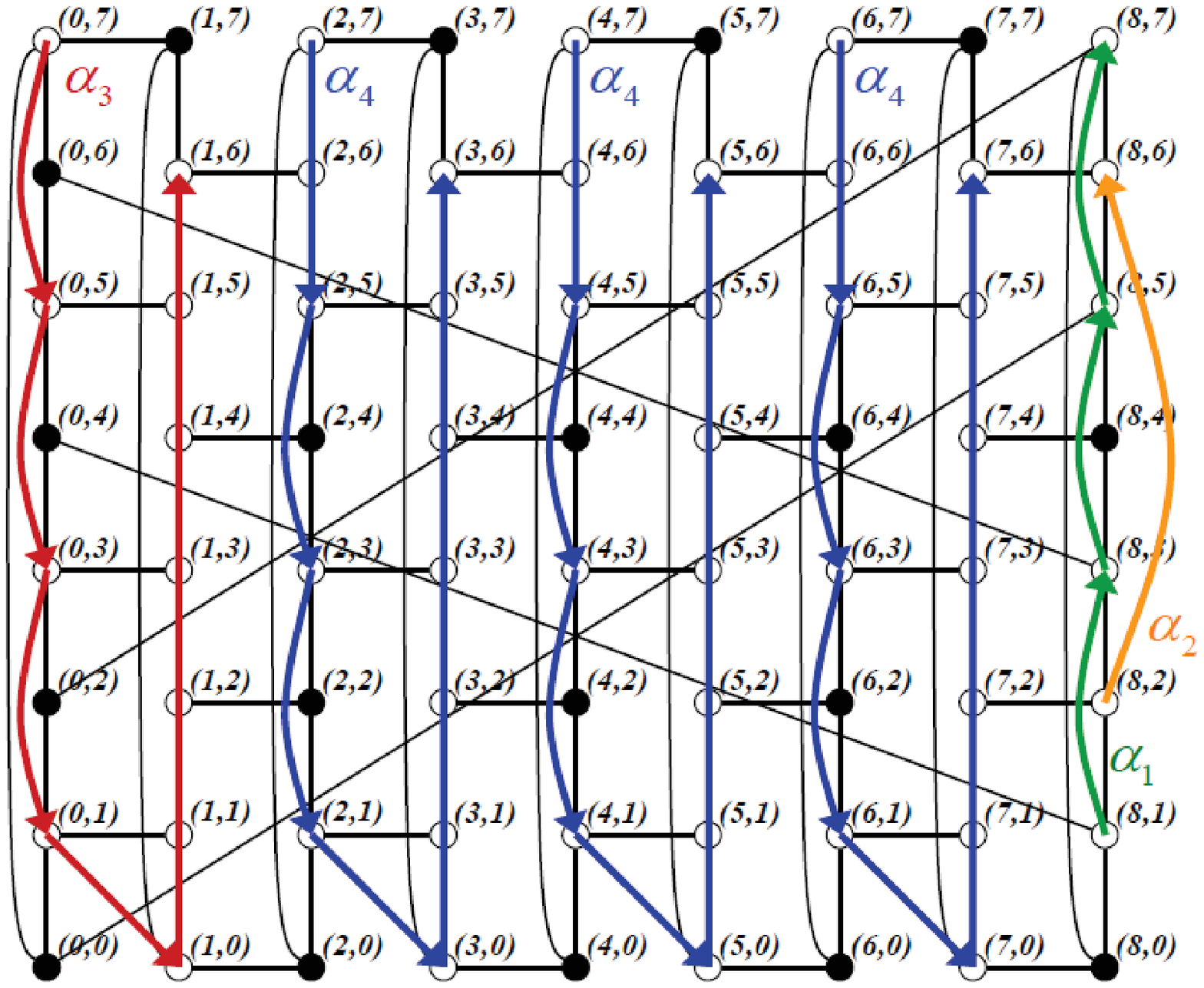}
 \begin{description}
    \item[Figure 3.]
    ${\rm GHT}(9,8,5)$ and its target sets $S$.
    Convinced subsequences $\alpha_1,\ldots,\alpha_4$ are illustrated by colored
    directed paths and
    $\alpha=\sqcup_{i=1}^{4} \alpha_i$.
 \end{description}
 \bigskip



 \includegraphics[scale=0.5]{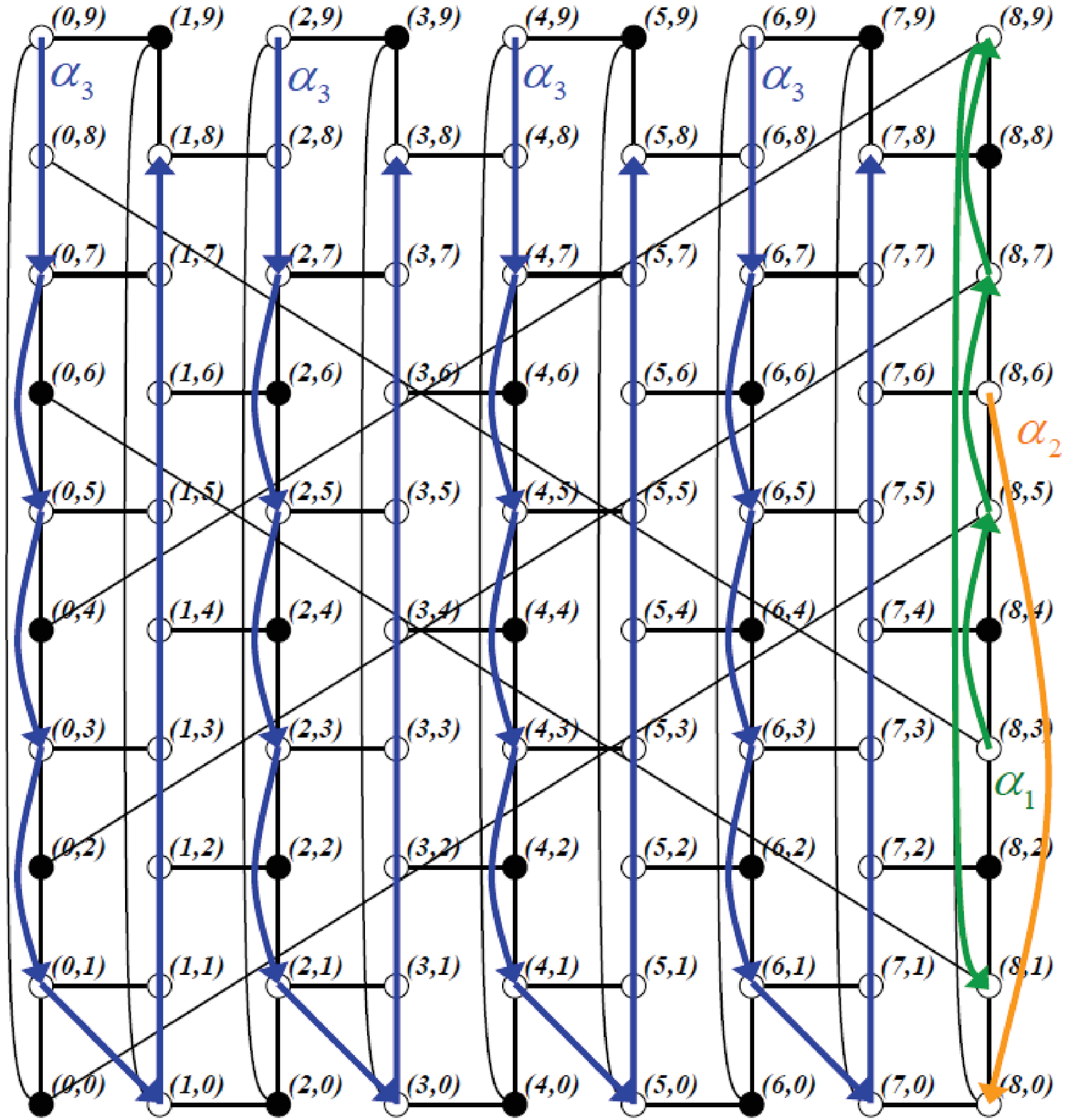}
 \begin{description}
    \item[Figure 4.]
    ${\rm GHT}(9,10,5)$ and its target sets $S$.
    Convinced subsequences $\alpha_1,\alpha_2,\alpha_3$ are illustrated by colored
    directed paths and
    $\alpha=\sqcup_{i=1}^{3} \alpha_i$.
 \end{description}
 \bigskip



 \includegraphics[scale=0.6]{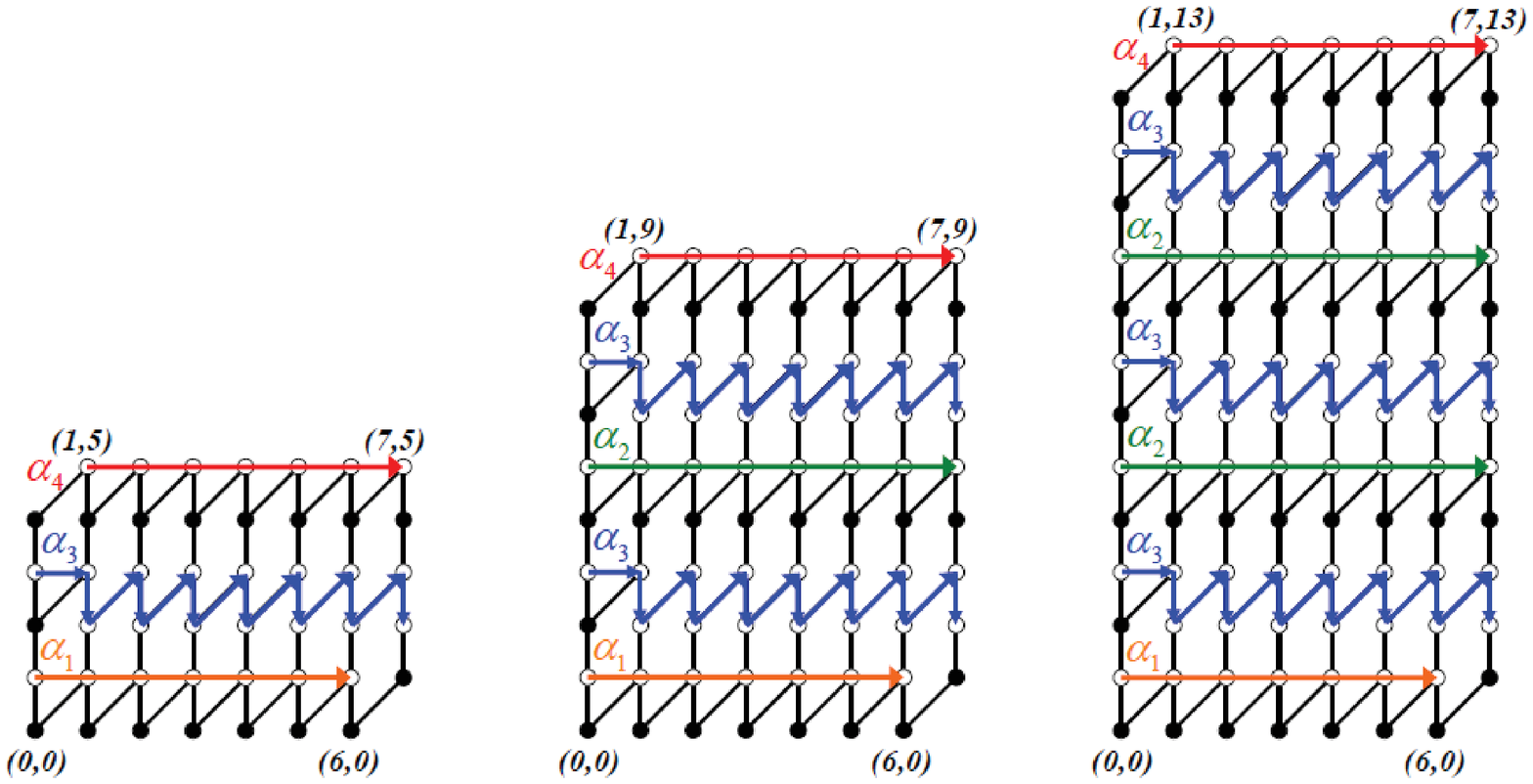}
 \begin{description}
    \item[Figure 5.]
    ${\rm PHG}(8,6)$, ${\rm PHG}(8,10)$, ${\rm PHG}(8,14)$ and their target sets $S$.
    Convinced subsequences $\alpha_1,\ldots,\alpha_4$ are illustrated by colored
    directed paths and
    $\alpha=\sqcup_{k=1}^{4}\alpha_k$.
 \end{description}
 \bigskip



 \includegraphics[scale=0.6]{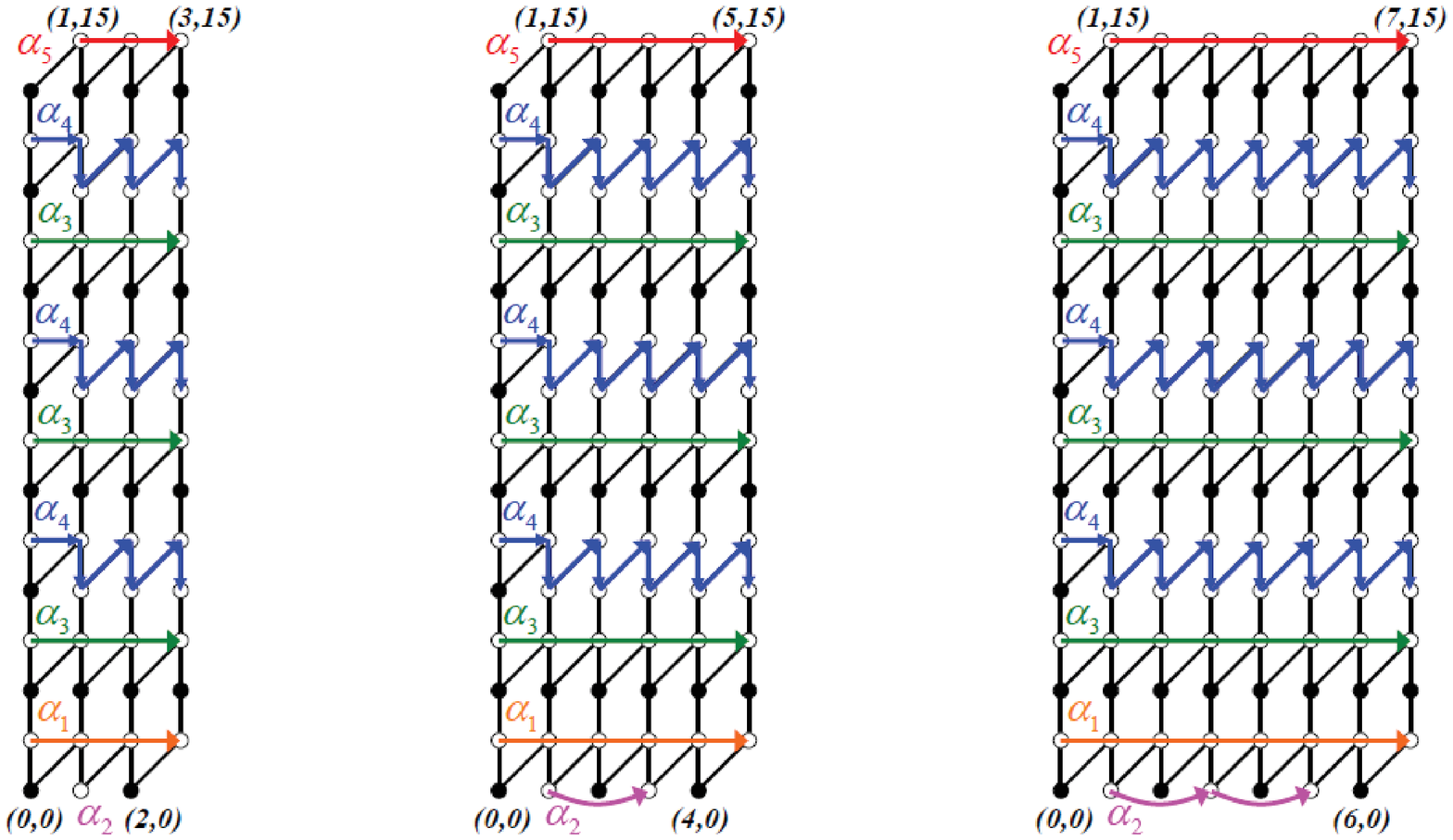}
 \begin{description}
    \item[Figure 6.]
    ${\rm PHG}(4,16)$, ${\rm PHG}(6,16)$, ${\rm PHG}(8,16)$ and their target sets $S$.
    Convinced subsequences $\alpha_1,\ldots,\alpha_5$ are illustrated by colored
    directed paths and
    $\alpha=\sqcup_{k=1}^{5}\alpha_k$.
 \end{description}
 \bigskip



 \includegraphics[scale=0.6]{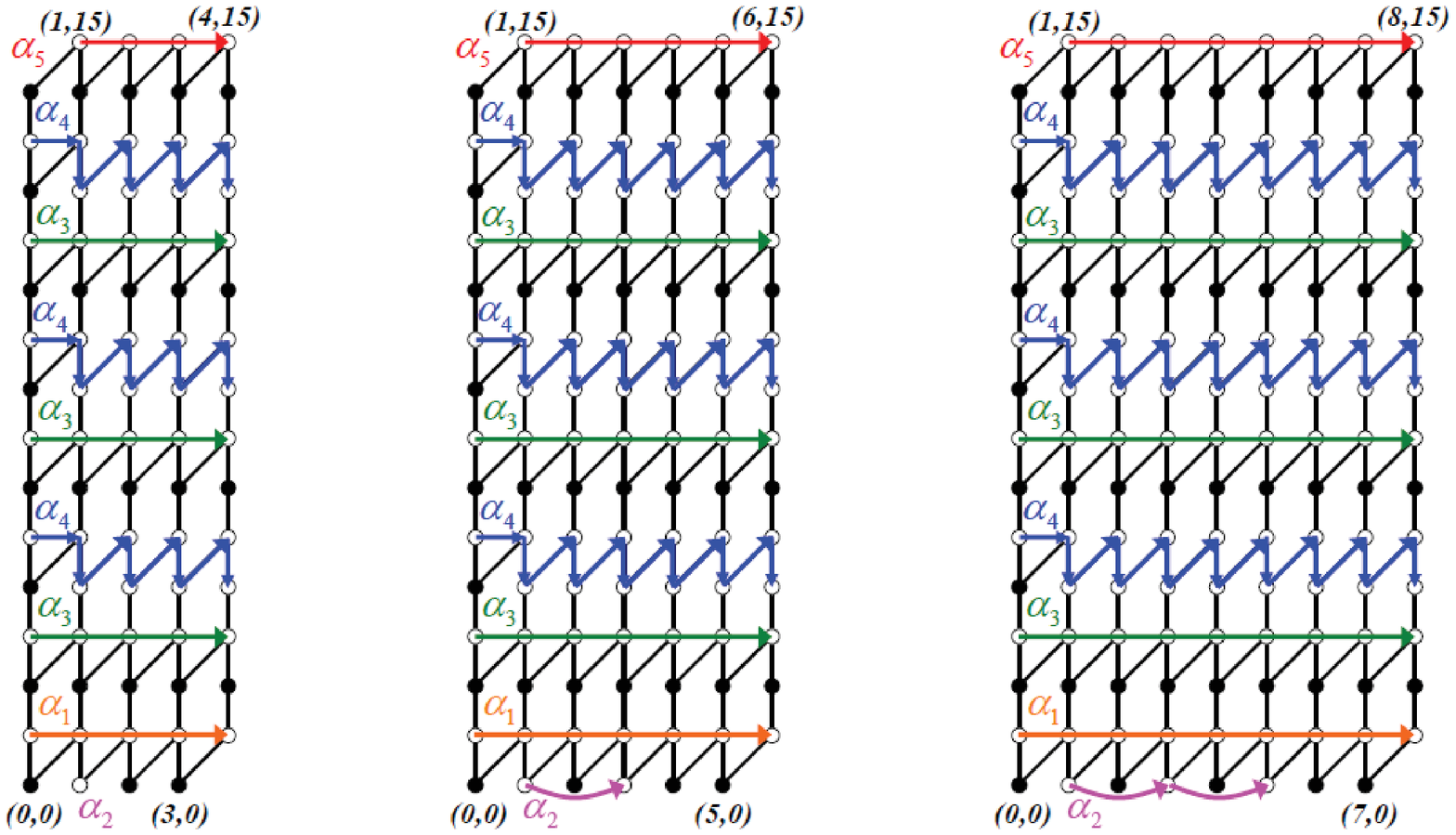}
 \begin{description}
    \item[Figure 7.]
    ${\rm PHG}(5,16)$, ${\rm PHG}(7,16)$, ${\rm PHG}(9,16)$ and their target sets $S$.
    Convinced subsequences $\alpha_1,\ldots,\alpha_5$ are illustrated by colored
    directed paths and
    $\alpha=\sqcup_{k=1}^{5}\alpha_k$.
 \end{description}
 \bigskip



 \includegraphics[scale=0.6]{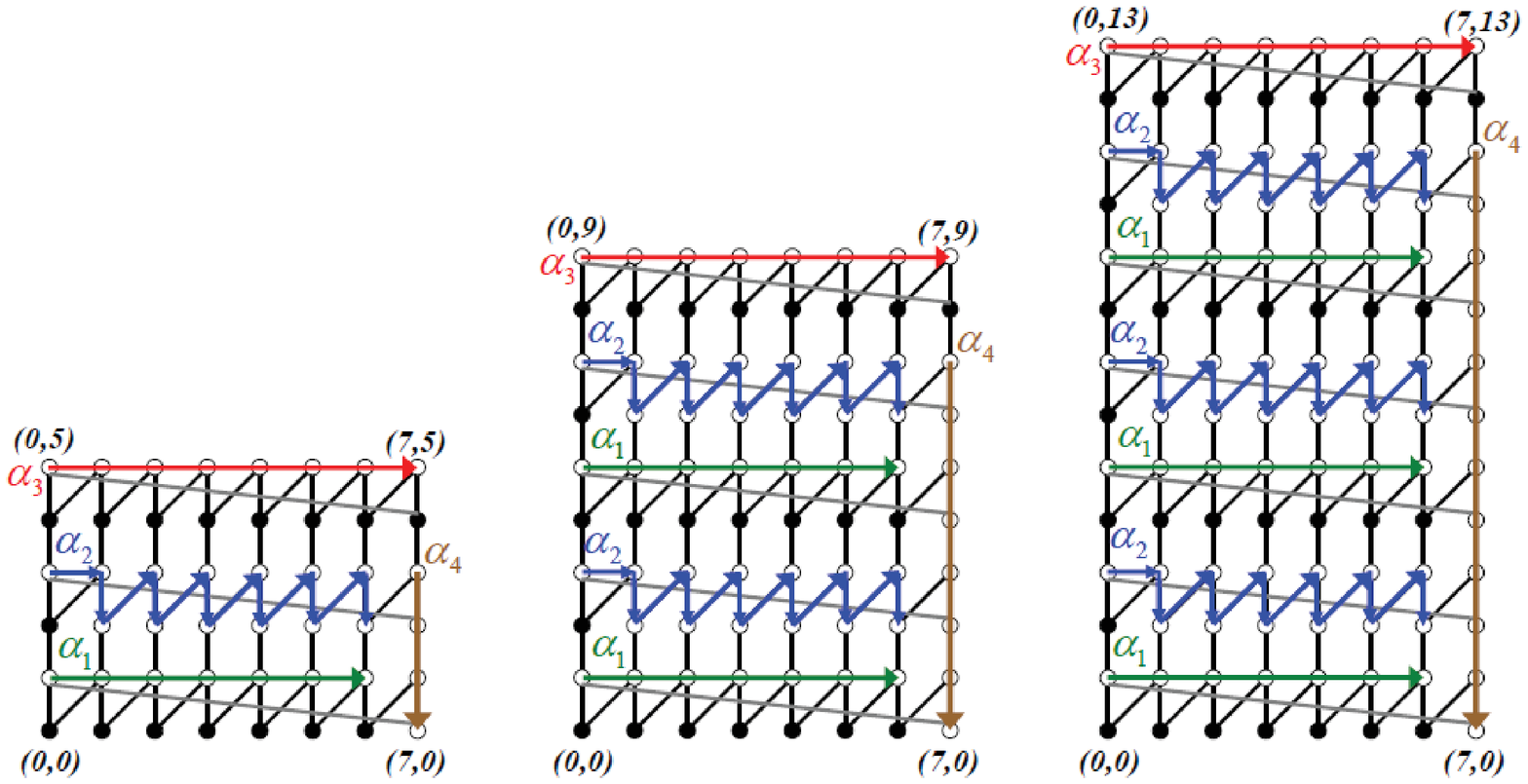}
 \begin{description}
    \item[Figure 8.]
    ${\rm CHG}(8,6)$, ${\rm CHG}(8,10)$, ${\rm CHG}(8,14)$ and their target sets $S$.
    Convinced subsequences $\alpha_1,\ldots,\alpha_4$ are illustrated by colored
    directed paths and
    $\alpha=\sqcup_{k=1}^{4}\alpha_k$.
 \end{description}
 \bigskip



 \includegraphics[scale=0.59]{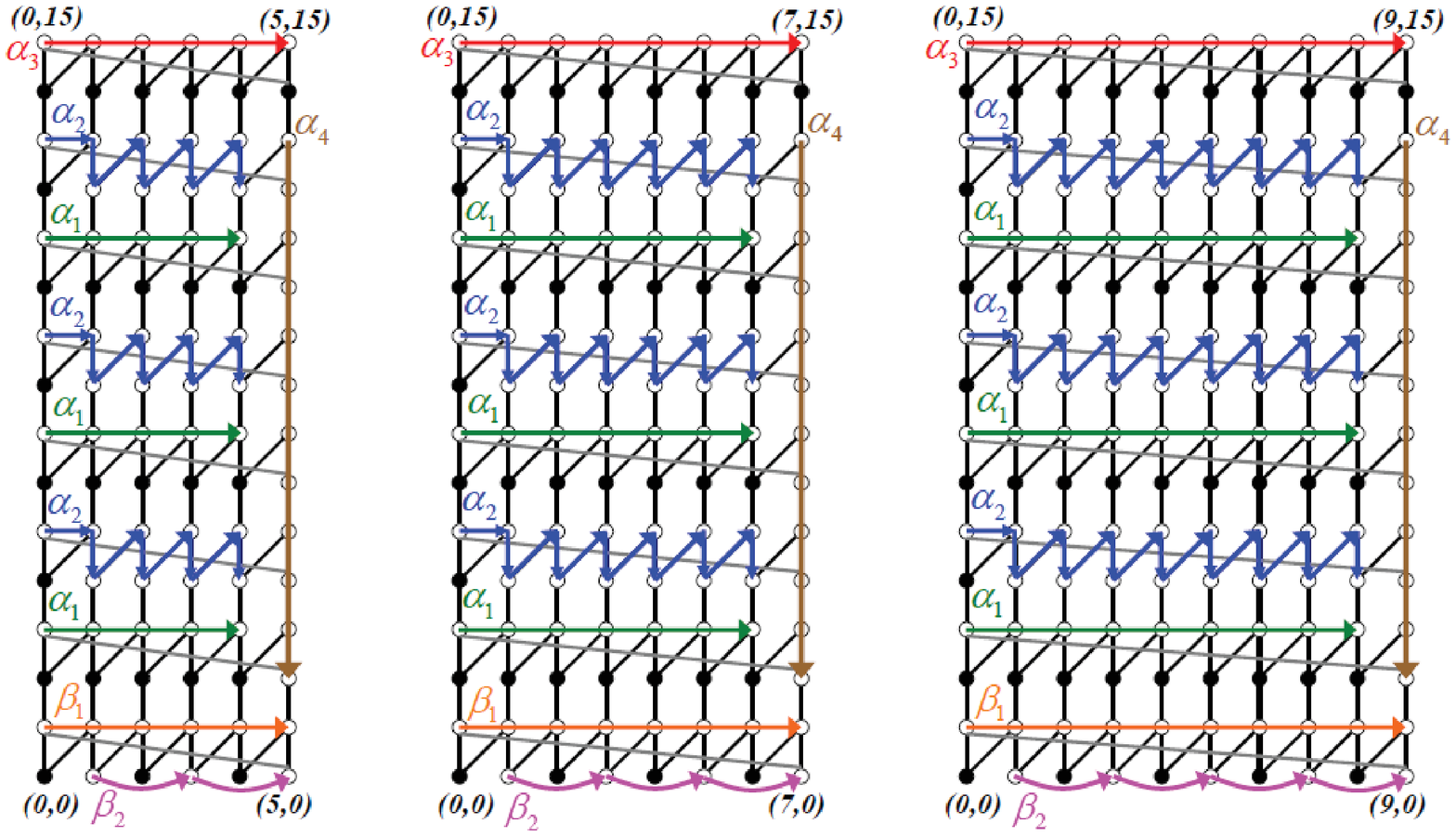}
 \begin{description}
    \item[Figure 9.]
    ${\rm CHG}(6,16)$, ${\rm CHG}(8,16)$, ${\rm CHG}(10,16)$ and their target sets $S$.
    Convinced subsequences $\alpha_1,\ldots,\alpha_4$ and $\beta_1,\beta_2$ are illustrated by colored
    directed paths and
    $\alpha=(\sqcup_{k=1}^{4}\alpha_k)\sqcup \beta_1 \sqcup \beta_2 $.
 \end{description}
 \bigskip



 \includegraphics[scale=0.6]{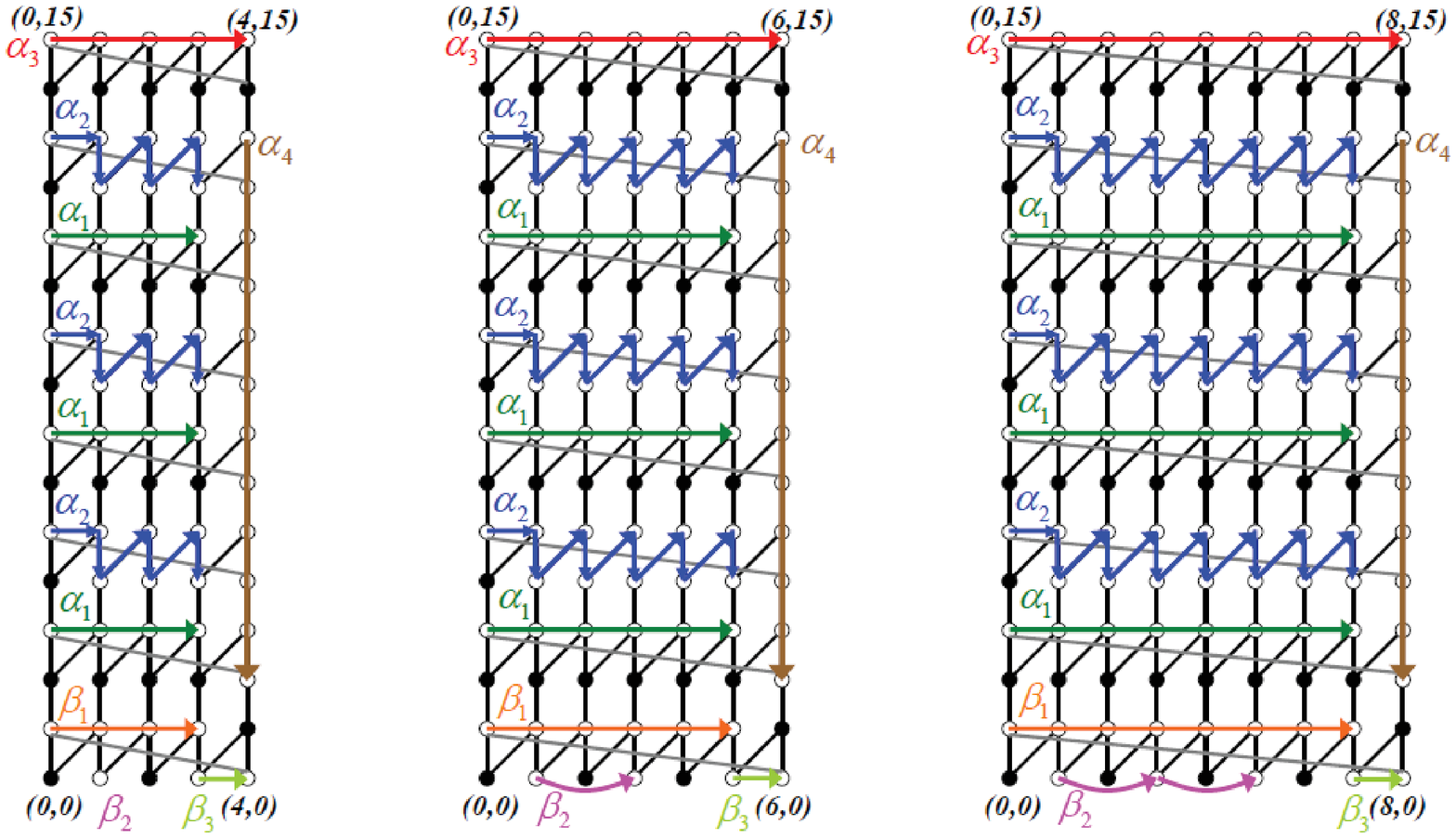}
 \begin{description}
    \item[Figure 10.]
    ${\rm CHG}(5,16)$, ${\rm CHG}(7,16)$, ${\rm CHG}(9,16)$ and their target sets $S$.
    Convinced subsequences $\alpha_1,\ldots,\alpha_4$ and $\beta_1,\beta_2,\beta_3$ are illustrated by colored
    directed paths and
    $\alpha=(\sqcup_{k=1}^{4}\alpha_k)\sqcup \beta_1 \sqcup \beta_2\sqcup \beta_3$.
 \end{description}
 \bigskip

\end{center}


\begin{thebibliography}{00}
\itemsep 0pt


   \bibitem{Ackerman2010}
  E.~Ackerman, O.~Ben-Zwi, G.~Wolfovitz,
  {\em Combinatorial model
  and bounds for target set selection},
  Theoretical Computer Science
  411 (2010) 4017-4022.

   \bibitem{Adams-Hexagonal-Grids2011}
   S.~S.~Adams, D.~S.~Troxell, S.~L.~Zinnen,
   {\em Dynamic monopolies and feedback vertex sets in hexagonal
   grids}, Computers and Mathematics with Applications 62 (2011)
   4049-4057.

   \bibitem{Alspach-Dean2009}
   B.~Alspach, M.~Dean,
   {\em Honeycomb toroidal graphs are Caley graphs},
   Information Processing Letters 109 (2009) 705-708.

   \bibitem{Ben-zwi2011}
   O.~Ben-Zwi, D.~Hermelin, D.~Lokshtanov, I.~Newman,
   {\em Treewidth governs the complexity of target set selection},
   Discrete Optimization
  8 (2011) 87-96.

   \bibitem{Berger2001}
   E.~Berger,
   {\em Dynamic monopolies of constant size},
   J. Combin. Theory Ser. B, 83 (2001) 191-200.

   \bibitem{Chang2010}
   C.~L.~Chang, Y.~D.~Lyuu,
   {\em Bounding the number of tolerable faults in majority-based
   systems},
   Lecture Notes in Computer Science 6078 (2010) 109-119.

   \bibitem{chen}
   N. Chen {\em On the approximability
   of influence in social networks},
   SIAM J. Discrete Math. 23 (2009) 1400-1415.


   \bibitem{Yeh2011}
    C-Y Chiang, L-H Huang, B-J Li, Jiaojiao Wu, H-G Yeh,
    {\em Some Results on the Target Set Selection Problem},
    submitted. Also in:
    arXiv:1111.6685v1.


        \bibitem{Yeh2011-GPG}
    C-Y Chiang, L-H Huang, W-T Huang, H-G Yeh,
    {\em The Target Set Selection Problem on
         Cycle Permutation Graphs, Generalized Petersen Graphs
         and Torus Cordalis}, submitted.  Also in:
         arXiv:1112.1313v1.

   \bibitem{Cho2003-GHT}
    H-J Cho, L-Y Hsu, {\em Generalized honeycomb torus},
    Information Processing Letters 86 (2003) 185-190.

   \bibitem{Dreyer+Roberts}
   P.~A.~Dreyer, F.~S.~Roberts,
   {\em Irreversible $k$-threshold processes: Graph-theoretical
   threshold models of the spread of disease and of opinion},
   Discrete Applied Math. 157 (2009) 1615-1627.

   \bibitem{feedback-set2000}
   P.~Festa, P.~M.~Pardalos, M.~G.~C.~Resende,
   {\em Feedback set problems},
   in Handbook of
   Combinatorial Optimization, Supplement Vol. A,
   Kluwer Academic Publishers, (2000) 209-259.

     \bibitem{Flocchini2001}
   P.~Flocchini, F.~Geurts, N.~Santoro,
   {\em Optimal irreversible dynamos in chordal rings},
   Discrete Applied Mathematics
   113 (2001) 23-42.

   \bibitem{Flocchini2003}
   P.~Flocchini, R.~Kr\'{a}lovi\v{c}, P.~Ru\'{z}i\v{c}ka, A.~Roncato, N.~Santoro.
   {\em On time versus size for
   monotone dynamic monopolies in regular topologies},
   Journal of Discrete Algorithms 1 (2003) 129-150.

  \bibitem{Flocchini2004}
   P.~Flocchini, E.~Lodi, F.~Luccio, L.~Pagli, N.~Santoro,
   {\em Dynamic monopolies in tori}, Discrete Applied Mathematics
   137 (2004) 197-212.

     \bibitem{Flocchini2009}
   P.~Flocchini,
  {\em Contamination and Decontamination in Majority-Based Systems},
  Journal of Cellular Automata
   4 (2009) 183-200.

   \bibitem{karp1972}
   R.~M.~Karp,
   {\em Reducibility among combinatorial problems},
   Complexity of Computer Computations 40 (1972) 85-103.


   \bibitem{kkt2003}
   D. Kempe, J. Kleinberg, E. Tardos, {\em Maximizing the spread of
   influence through a social network}, in Proceedings of the 9th ACM SIGKDD international conference on
   Knowledge discovery and data mining, pages 137-146, 2003.

   \bibitem{kkt2005}
   D. Kempe, J. Kleinberg, E.~Tardos, {\em Influential nodes in a
   diffusion model for social networks}, in: Proceedings of the 32th
   International Colloquium on Automata, Languages and Programming,
   pages 1127-1138, 2005.

   \bibitem{Khoshkhah-Zaker2012}
   K.~Khoshkhah, H.~Soltani, M.~Zaker,
   {\em On dynamic monopolies of graphs: The average and strict
   majority thresholds}, Discrete Optimization (2012),
   doi:10.1016/j.disopt.2012.02.001

   \bibitem{Kilgour2010}
   D.~M.~Kilgour,
   {\em Approval balloting for muti-winner elections},
   in J.~Laslier \& M.~R.~Sanver (Eds.),
   Handbook on approval voting, pages 105-124, Heidelberg, Springer.

   \bibitem{Kutten-Peleg1995}
   S.~Kutten, D.~Peleg,
   {\em Fault-local distributed mending}, in: Proceedings of the 36th IEEE Symposium on Foundations of Computer Science, 1995.

   \bibitem{Linial-Peleg1993}
   N.~Linial, D.~Peleg, Y.~Rabinovich, M.~Saks,
   {\em Sphere packing and local majorities in graphs},
   in: Second ISTCS, IEEE Computer Society Press,
   Silver Spring, MD, pages 141-149, 1993.


   \bibitem{Luccio1998}
   F.~Luccio,
   {\em Almost exact minimum feedback vertex set in meshes and butterflies},
   Information Processing Letters 66 (1998) 59-64.

   \bibitem{Luccio1999}
   F.~Luccio, L.~Pagli, H.~Sanossian.
   {\em Irreversible dynamos in butterflies}, 6th Int. Coll. on
   Structural Information and Communication Complexity (SIROCCO),
   pages 204-218, 1999.

   \bibitem{Parhami2001}
   B.~Parhami, D-M Kwai,
   {\em A Unified Formulation of Honeycomb and Diamond Networks},
   IEEE Transactions on paralel and distributed systems 12 (2001)
   74-80.

   \bibitem{Peleg1998}
   D.~Peleg,
   {\em Size bounds for dynamic monopolies},
   Discrete Applied Mathematics
   86 (1998) 263-273.

   \bibitem{Peleg2002}
   D.~Peleg,
   {\em Local majorities, coalitions and monopolies in graphs: a review},
   Theoretical Computer Science
   282 (2002) 231-257.

   \bibitem{SIAM2005decycling-two-cycles}
   D.~A.~Pike, Y. Zou,
   {\em Decycling Cartesian products of two cycles},
   SIAM Journal on Discrete Mathematics, 19 (2005) 651-663.


   \bibitem{Stojmenovic1997}
   I. Stojmenovic, {\em Honeycomb networks: Topological properties and
   communication algorithms},
   IEEE Trans. Parallel Distributed Systems, 8 (1997) 1036-1042.

   \bibitem{watts2007HBR}
   D.~J.~Watts,  {\em The accidental influentials},
   Harvard Business Review, 85 (2007)
   22-23.

   \bibitem{Yang2004}
   X. Yang,
   {\em The diameter of honeycomb rhombic tori},
   Applied Mathematics Letters 17 (2004) 167-172.

   \bibitem{zaker2012}
   M.~Zaker,
   {\em On dynamic monopolies of graphs with general thresholds},
    Discrete Mathematics 312 (2012) 1136-1143.




\end{thebibliography}
\end{document}